\documentclass[11pt]{article}

\usepackage{fullpage,amsmath,amsfonts,amssymb,amsthm,graphicx,xcolor,hyperref,algorithm,algorithmic,lmodern,enumerate,enumitem,multirow,verbatim,mathabx,caption,subcaption}
\usepackage{mdframed}
\usepackage{xcolor}
\usepackage{mathtools}

\definecolor{darkgreen}{rgb}{0.0, 0.5, 0.0} 

\hypersetup{
    colorlinks=true,
    linkcolor=darkgreen,  
    citecolor=blue,       
    urlcolor=blue,        
    pdfborder={0 0 0}     
}

\newcommand{\ignore}[1]{}

\ignore{
\newtheorem{theorem}{Theorem}[section]
\newtheorem{claim}[theorem]{Claim}
\newtheorem{proposition}[theorem]{Proposition}
\newtheorem{lemma}[theorem]{Lemma}
\newtheorem{corollary}[theorem]{Corollary}
\newtheorem{conjecture}[theorem]{Conjecture}
\newtheorem{observation}[theorem]{Observation}
\newtheorem{fact}[theorem]{Fact}
\theoremstyle{definition}
\newtheorem{example}[theorem]{Example}
\newtheorem{aalgorithm}[theorem]{Algorithm}
\newtheorem{definition}[theorem]{Definition}
\newtheorem{assumption}[theorem]{Assumption}
\newtheorem{remark}[theorem]{Remark}
\newtheorem{problem}[theorem]{Problem}
\newtheorem*{theorem_notag}{}
}
\newtheorem{theorem}{Theorem}[section]

\newtheorem{lemma}[theorem]{Lemma}

\theoremstyle{definition}

\newtheorem{definition}[theorem]{Definition}

\newtheorem{problem}[theorem]{Problem}

\newcommand{\naturals}{\mathbb{N}}
\newcommand{\integers}{\mathbb{Z}}
\newcommand{\reals}{\mathbb{R}}

\newcommand{\coss}[1]{\cos\mkern-2mu\left(#1\right)}
\newcommand{\arccoss}[1]{\arccos\mkern-2mu\left(#1\right)}
\newcommand{\sinn}[1]{\sin\mkern-2mu\left(#1\right)}
\newcommand{\tann}[1]{\tan\mkern-2mu\left(#1\right)}

\renewcommand{\d}[1]{\operatorname{d}\mkern-2mu{#1}}
\newcommand{\norm}[1]{\left\lVert#1\right\rVert}
\def\dd{\mathrm{d}}

\newcommand{\adi}{\textsc{ADI}}

\newcommand{\syst}{\textsc{Sys}}
\newcommand{\spocp}{\textsc{SPOCP}}

\begin{document}

\title{
Optimal Average Disk-Inspection via Fermat’s Principle 
\thanks{This is the full version of a paper with the same title accepted to the 43rd International Symposium on Theoretical Aspects of Computer Science (STACS~2026), Grenoble, March~9--13,~2026.}
}

\author{%
  Konstantinos Georgiou\thanks{Research supported in part by an NSERC Discovery Grant and by the Toronto Metropolitan University Faculty of Science Dean’s Research Fund.}\\[1ex]
  \small Department of Mathematics, Toronto Metropolitan University,Toronto, Ontario, Canada\\
  \small \texttt{konstantinos@torontomu.ca}
}

\maketitle

\begin{abstract}

This work resolves the optimal average-case cost of the Disk-Inspection problem, a variant of Bellman's 1955 lost-in-a-forest problem. In Disk-Inspection, a mobile agent starts at the center of a unit disk and follows a trajectory that inspects perimeter points whenever the disk does not obstruct visibility. The worst-case cost was solved optimally in 1957 by Isbell~\cite{isbell1957optimal}, but the average-case version remained open, with heuristic upper bounds proposed by Gluss~\cite{gluss1961alternative} in 1961 and improved only recently in~\cite{conley2025multi}.  

Our approach applies Fermat's Principle of Least Time to the discretization framework of~\cite{conley2025multi}, showing that optimal solutions are captured by a one-parameter family of recurrences independent of the discretization size. In the continuum limit these recurrences give rise to a single-parameter optimal control problem, whose trajectories coincide with limiting solutions of the original Disk-Inspection problem. A crucial step is proving that the optimal initial condition generates a trajectory that avoids the unit disk, thereby validating the optics formulation and reducing the many-variable optimization to a rigorous one-parameter problem. In particular, this disproves Gluss's conjecture~\cite{gluss1961alternative} that optimal trajectories must touch the disk.  

Our analysis determines the exact optimal average-case inspection cost, equal to $3.549259\ldots$ and certified to at least six digits of accuracy.

\vspace{0.5cm}
\noindent\textit{Keywords:} Inspection, Disk, Average-Case Performance

\end{abstract}

\ignore{
\newpage
\tableofcontents
\newpage
}

\newpage

\tableofcontents

\newpage

\section{Introduction}

A hiker (mobile agent) stands in a forest, knowing only that the boundary is a straight infinite line at distance one but not its orientation. The task is to design a trajectory that guarantees reaching the boundary. Performance is evaluated in the worst-case (adversarial orientation) or in the average-case (orientation drawn uniformly at random).

The above setting is Bellman’s 1955 \emph{lost-in-a-forest} problem specialized to a halfspace with known distance, commonly called the \emph{shoreline problem with known distance}. Isbell~\cite{isbell1957optimal} solved the worst-case version optimally in 1957. The average-case variant was first examined by Gluss~\cite{gluss1961alternative}, who proposed heuristic search strategies to the problem. 
He further conjectured that optimal solutions, as in the worst-case setting, must touch the unit disk. 
More recently, the same problem was studied again from the equivalent perspective of \emph{Disk-Inspection}, leading to a high-dimensional nonconvex NonLinear Program (NLP)~\cite{conley2025multi} modelling a discretized version of the problem, which achieved an average-cost upper bound of $3.5509015$. However, that approach could not certify global optimality of the NLP solutions (which would imply that the reported bound is close to the true optimum), the derived solutions did not scale computationally (which would theoretically give better upper bounds), and it offered no structural characterization of optimal trajectories in the continuous setting. Moreover, no nontrivial lower bound for the average-case cost was known, leaving it entirely unclear how close the reported bound was to the truth.

In this work, we resolve the Average-Case Disk-Inspection problem. We prove that the exact optimum is 
$
3.54925958\ldots,
$
accurate to at least six digits, and that it is realized by trajectories determined through the solutions of an ODE system. Our key technical contribution is to reformulate the discretized problem, via Fermat’s Principle of Least Time, as an optics problem. Assuming an optimal trajectory does not touch the disk, we show that optimal solutions to the discrete inspection problem admit a recurrence structure that, in the continuum limit, yields a single-parameter optimal control problem. In this optimization problem, feasible solutions are trajectories of an ODE system determined by one initial condition, and optimizing over this parameter gives a rigorously certifiable optimum. A crucial part of the analysis is to prove that the non-touching condition is in fact satisfied, so the reduction to the optics problem is valid. Overall, this transforms the previous nonconvex many-variable optimization of~\cite{conley2025multi} with no guarantees into a tractable one-parameter problem, completing the proof of optimality and establishing that an optimal trajectory avoids the unit disk, contrary to Gluss’s conjecture~\cite{gluss1961alternative}, thereby settling the optimal solution to the average-case problem definitively.


\subsection{Related Work}
\label{sec: related work}

In the mid-50s, Bellman~\cite{bellman1956minimization} introduced what is nowadays called \emph{lost-in-a-forest-problem}, one of the earliest formal questions in search theory. The input is a region $R$ (the forest) containing a point $P$ (starting position of a searcher/mobile agent) chosen at random, and the objective is to design a trajectory that minimizes the expected time to reach the boundary, starting from $P$. This formulation became a cornerstone for later work on search under uncertainty, and still many variations that have been proposed through the decades remain open. 

Surveys of the lost-in-a-forest problem are given in~\cite{berzsenyi1995lost,finch2004lost}. Work on specific domains includes convex polygons~\cite{gibbs2016bellman}, strips~\cite{finch2004lost} and general regions analyzed through competitive ratios~\cite{kubel2021approximation}. 
When region $R$ is a half-space, and the starting position of the agent is fixed deterministically, the problem is known as the \emph{shoreline problem} and was first studied for a single agent in~\cite{baeza1988searching} and later extended to several agents in~\cite{baeza1997searching,baeza1995parallel,baezayates1993searching,jez2009two}. 
For one agent, the logarithmic spiral is the best strategy known, with ratio $13.81$~\cite{baeza1988searching,finch2005searching}. The strongest lower bounds known are $6.3972$ in general~\cite{baeza1995parallel} and $12.5385$ for cyclic searches~\cite{langetepe2012searching}. For two agents, a double logarithmic spiral achieves ratio $5.2644$~\cite{baeza1995parallel}. The optimal solutions to the one and two agent search problems are still open. 
For $n \geq 3$, trajectories along rays achieve at most $1/\cos(\pi/n)$~\cite{baeza1995parallel}, and lower bounds matching these values were proven for $n \geq 4$ and for $n=3$ in~\cite{AcharjeeGKS19} and~\cite{dobrev2020improved}, respectively. 

The variation to lost-in-a-forest pertaining to our results is the \emph{shoreline problem with known distance} where the searcher knows her distance to the boundary of the forest. 
For minimizing the worst-case cost, the problem was solved optimally by Isbell~\cite{isbell1957optimal} for one agent, and by Dobrev et al. in~\cite{dobrev2020improved} for two agents, while~\cite{conley2025multi} recently extended the results to the case where the hidden shorelines are tangent to a contiguous portion of the disk, rather than the entire disk.

The average-case version of the shoreline problem with known distance was first studied by Gluss~\cite{gluss1961alternative}, who proposed two heuristic strategies and carried out a rigorous expected-cost analysis for each. He further conjectured that any optimal trajectory, as in the worst-case setting, must touch the disk. The numerical values reported in~\cite{gluss1961alternative} were miscalculated, and as explained in the full version~\cite{conley2025multiagentdiskinspectionArxiv} of the conference paper~\cite{conley2025multi}, the best heuristic bound due to Gluss is $3.63489\ldots$. 
The first systematic approach to the average-case problem was given in the conference version~\cite{conley2025multi} and its full version~\cite{conley2025multiagentdiskinspectionArxiv}. Their method introduced a discretization framework and formulated a nonconvex NonLinear Program (NLP) whose feasible solutions correspond to valid inspection trajectories, with the objective exactly capturing their average cost. Solving this NLP for large discretization parameter $k$ produced a sequence of feasible trajectories and rigorous upper bounds, resulting in a reported value of $3.5509015\ldots$. However, because the NLP is nonconvex, the obtained solutions could not be certified as globally optimal, and its computational hardness restricted $k$ to moderate values, preventing sharper approximations. 
No lower bounds for the average-case cost had been studied or reported, so it remained unclear how close the best previously reported bound was to the true optimum.

Beyond shoreline search and lost-in-a-forest formulations, related problems in search theory have been studied extensively. The field itself is well established, with books and surveys providing systematic accounts of its models and applications~\cite{ahlswede1987search,alpern2013search,alpern2006theory,CGK19search,gal2010search}. Within point search in the plane, Langetepe~\cite{langetepe2010optimality} proved that the logarithmic spiral is optimal for a single robot, while~\cite{georgiou2025spiralsLATIN} recently analyzed the case of multi-speed agents. 
Other studies considered search for a circle in a plane~\cite{gluss1961minimax}, 
multi-agent and grid searches with memory constraints~\cite{emek2015many,Emekicalp2014,fricke2016distributed,LangnerKUW15}, and more recent work investigated searches in geometric terrains~\cite{bouchard2018deterministic,pelc2018reaching} and the role of information in determining search cost~\cite{pelc2018information,pelc2019cost}, to name only a few.

\section{Inspection Problems and New Contributions}
\label{sec: new contributions}

In this section we introduce the inspection problems studied in this work and state our main result, before outlining the technical framework that supports it.

\subsection{Problem Definition and the Main Result}
\label{sec: definition}

We begin by defining inspective trajectories and formalizing both the worst-case and average-case versions of Disk-Inspection, leading up to the main theorem.

\begin{definition}[Inspective Trajectory]
\label{def: inspective curve}
A continuous and piecewise differentiable path in $\reals^2$ (i.e., a curve with two endpoints) is called an \emph{inspective trajectory} if its convex hull contains the unit disk centered at one of its endpoints.
\end{definition}

The requirement that the curve segment be piecewise differentiable stems from the fact that the problem we study concerns arc lengths. This leads to the following problem, also known as the shoreline problem with known distance.

\begin{problem}[Worst-Case Disk-Inspection]
\label{prob: wrst inspection}
Find an inspective trajectory of minimum length. 
\end{problem}

The Worst-Case Disk-Inspection problem was solved in~\cite{isbell1957optimal}, with optimal cost $1+\sqrt3+7\pi/6\approx 6.39724$. Inspective trajectories admit two equivalent descriptions, see~\cite{conley2025multi}. First, they \emph{inspect} every point of the unit disk in the following sense. For any perimeter point $P$ there exists a point $A$ on the trajectory such that $\|\lambda A+(1-\lambda)P\|\geq 1$ for all $\lambda\in[0,1]$ (that is, $A$ sees $P$ without obstruction by the disk). Second, the trajectories intersect every line tangent to the unit disk (i.e. they discover eventually all shorelines). In what follows we use the inspection perspective, which will also define the average-case problem. 

Given an inspective trajectory and a point $P$ on the unit disk, define the inspection time $I_P$ as the length of the trajectory from the origin to the first point that inspects $P$ (equivalently, the time for a unit-speed inspector starting at the origin to see $P$ from outside the disk). The Worst-Case Disk-Inspection, i.e. Problem~\ref{prob: wrst inspection}, asks for the inspective trajectory that minimizes $\sup_P I_P$, where the supremum is over all perimeter points $P$. The average-case version replaces the supremum by expectation.

\begin{problem}[Average-Case Disk-Inspection -- \adi]
Find an inspective trajectory that minimizes $\mathbb{E}_P[I_P]$, where $P$ is chosen uniformly on the perimeter of the disk.
\end{problem}

The cost of \adi\ is defined with respect to an inspector starting at the disk’s center and following an inspective trajectory. One may also allow randomized algorithms that draw a curve from a distribution of inspective trajectories. After the algorithm is fixed, the adversary selects a point $P$, and the algorithm’s performance is the expected inspection time at $P$. If the algorithm first applies a uniform random rotation, then all perimeter points are symmetric, and the adversary’s choice of $P$ has no effect. In this case the cost of a distribution equals the expected inspection time of a randomly chosen point, which is minimized by the deterministic curve achieving the smallest such expectation. Thus the deterministic \adi\ problem already captures the randomized model.

The average-case problem was first studied by Gluss~\cite{gluss1961alternative}, who proposed two heuristic strategies and conjectured that optimal trajectories touch the disk. More recently, a discretization-based nonlinear programming framework was developed~\cite{conley2025multiagentdiskinspectionArxiv,conley2025multi}, yielding the first systematic upper bounds, with best reported value $3.5509015$. However, the nonconvexity of this approach precluded global optimality guarantees, and no lower bounds were known, so the exact optimum remained unresolved.  
We resolve this problem with the following result, establishing the optimal value of \adi\ with high numerical accuracy. In particular, we prove that the previously reported upper bound in~\cite{conley2025multi} was indeed very close to optimal, and we obtain this conclusion through a new continuous framework that replaces the earlier discrete, upper-bound approach. The technical contributions underlying this framework are presented in Section~\ref{sec: new technical contributions}.  
We now state our main result.

\begin{theorem}
\label{thm: main optimal result}
$\lvert C_{\mathrm{opt}} - 3.549259\rvert < 10^{-7}$, where $C_{\mathrm{opt}}$ is the cost for an optimal solution to ADI.
\end{theorem}
All numerical errors are controlled as in Section~\ref{sec: proof of technical lemma}, yielding at least six correct decimal digits, with a discussion on numerical robustness in Section~\ref{sec:numerics}. 
The trajectory certifying Theorem~\ref{thm: main optimal result} is shown in Figure~\ref{fig: optimal} and quantified in the next section.

\subsection{New Technical Contributions}
\label{sec: new technical contributions}

Our second contribution is a continuous framework that replaces the discrete approach of~\cite{conley2025multi}. 
In this framework, candidate inspection trajectories are described, partially, by trajectories generated from an ordinary differential equation (ODE) system. 
The system is governed by a single real parameter, which turns \adi\ into a one-parameter optimal control problem. 
We now introduce the ODE system and the associated trajectories, which form the basis of the technical result leading to Theorem~\ref{thm: main optimal result}.

\begin{definition}[ODE system $\syst(\tau_0)$ for $(\psi,\tau)$]
\label{def: psi tau functions ode and Tau}
For a parameter $\tau_0 \in \reals_{\geq 0}$, let $\psi,\tau:[0,1]\to \reals$ be the unique solution to
\begin{align*}
\psi'(x) &= -2\pi + \frac{\cot \psi(x)}{x}, & \psi(0) &= \tfrac{\pi}{2}, \\
\tfrac{1}{2\pi}\tau'(x) &= \tau(x)\cot\psi(x) - 1, & \tau(0) &= \tau_0 .
\end{align*}
\end{definition}

Lemma~\ref{lem: wellposed} (Appendix~\ref{sec: Proofs Omitted from Section sec: new technical contributions}) shows that $\syst(\tau_0)$ is well defined near $x=0$ and extends uniquely to $[0,1]$.
We later solve this system numerically to identify the trajectories defined below.

\begin{definition}[Curve $\mathcal T^{\tau_0}$]
\label{def: tau trajectory}
Let $(\psi,\tau)$ be the solution to $\syst(\tau_0)$. 
The associated curve is
$$
\mathcal{T}^{\tau_0}:[0,1]\to \reals^2,\quad 
\mathcal{T}^{\tau_0}(x)=\big(\mathcal T^{\tau_0}_1(x), \mathcal T^{\tau_0}_2(x)\big),
$$
where
\begin{align*}
\mathcal T^{\tau_0}_1(x) &= \cos(2\pi x) - \tau(x)\sin(2\pi x), \\
\mathcal T^{\tau_0}_2(x) &= -\sin(2\pi x) - \tau(x)\cos(2\pi x).
\end{align*}
\end{definition}
We refer to $\mathcal T^{\tau_0}$, the solution to ODE $\syst(\tau_0)$, as a curve. 
For suitable $\tau_0>0$, this curve corresponds to a portion of the optimal inspective trajectory certifying Theorem~\ref{thm: main optimal result}. Not every choice of $\tau_0$ yields an inspective trajectory, which is why in the following definition we distinguish some $\tau_0$ that result to curves that do not intersect the unit disk. 

\begin{definition}
\label{def: inspection feasible}
$\tau_0 \in \reals_{\geq 0}$ is called \emph{inspection-feasible} for $\syst(\tau_0)$ if:\\ 
- for all $x\in [0,1]$, we have $\|\mathcal T^{\tau_0}(x)\| > 1$, and \\
- there exists $\xi \in (0,1]$ with $\mathcal T^{\tau_0}_1(\xi)=1$.
The set of all inspection-feasible values is denoted by $\mathcal I$. 
For each $\tau_0 \in \mathcal I$, we denote by $\xi(\tau_0)$ the smallest value of $\xi>0$ that satisfies the second condition, and call it the \emph{deployment parameter} of $\tau_0$.\footnote{The terminology reflects that $\xi$ determines the initial deployment angle $\theta=(1-\xi)\pi$ of the deployment phase, see Figure~\ref{fig: optimal}.}
\end{definition}

Following the standard shoreline framework (see, e.g.,~\cite{conley2025multi}, building on~\cite{gluss1961alternative}), we restrict attention to \adi\ trajectories that start with a
a so-called \emph{deployment phase}. 
This is defined as a line segment connecting the center of the disk to some point $A_\infty$ on the line $x=1$, where the slope $\theta$ of segment $OA_\infty$ will be referred to as the \emph{deployment angle}, and hence $A_\infty=(1,\tan\theta)$ with $\theta\in[0,\pi/2)$.
The reader may consult Figure~\ref{fig: optimal} which shows the optimal inspective trajectory certifying Theorem~\ref{thm: main optimal result}.
The initial deployment phase is followed by the so-called \emph{inspection phase} shown as the green curve
whose other endpoint is $A_0$. In the discrete setting (Figure~\ref{fig: DiscreteTrajectory}), the endpoint is denoted $A_k$, where $k$ is proportional to a discretization parameter that tends to infinity. It is therefore natural to denote the limiting endpoint by $A_\infty$.

\begin{figure}[h!]
    \centering
    \begin{minipage}{5cm}
        \includegraphics[width=\linewidth]{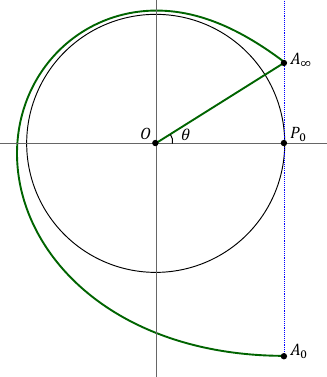}
    \end{minipage}%
    \hspace{1cm}%
    \begin{minipage}{0.58\textwidth}
        \caption{
        The optimal inspective trajectory certifying Theorem~\ref{thm: main optimal result}. 
        The trajectory starts with the deployment phase $OA_\infty$, and is followed by the green curve (inspection phase) $\mathcal T^{\tau_0}$, solution to $\syst(\tau_0)$, for $\tau_0 \approx 1.64697686$, which is inspection-feasible with deployment parameter $\xi(\tau_0)\approx 0.81190987$. 
The inspection phase is therefore the curve $\mathcal T^{\tau_0}(x)$ for $x\in[0,\xi(\tau_0)]$, with endpoints $A_0=\mathcal T^{\tau_0}(0)$ and $A_\infty=\mathcal T^{\tau_0}(\xi(\tau_0))$. 
The dotted blue line marks $x=1$, corresponding to the feasibility condition $\mathcal T^{\tau_0}_1(\xi)=1$. 
The deployment angle is $\theta=(1-\xi(\tau_0))\pi$, with labelled points 
$P_0=(1,0)$, $A_\infty=(1,\tann{\theta})$, and $A_0=(1,-\tau_0)$.
}
        \label{fig: optimal}
    \end{minipage}
\end{figure}

We are now ready to state the main technical result, which expresses the cost of \adi\ as a function of the curve $\mathcal T^{\tau_0}$ and of $(\psi,\tau)$, the solution to $\syst(\tau_0)$. For ease of reference, 
and inspired by standard terminology in control theory,
we call the resulting optimization problem a \emph{Single-Parameter Optimal Control Problem} (\spocp) where each feasible solution is the trajectory of an ODE system determined by a single initial condition serving as the decision variable. In our setting the initial parameter is $\tau_0$, and the corresponding feasible trajectories are generated by $\syst(\tau_0)$. 
We refer to this problem as $\spocp({\tau_0})$, and is formally stated in Theorem~\ref{thm: main result} below.

\begin{theorem}[$\spocp(\tau_0)$ Formulation]
\label{thm: main result}
If the inspection phase of an optimal trajectory for \adi\ does not touch the unit disk, then the inspection phase is the subcurve $\left(\mathcal T^{\tau_0}(x)\right)_{0\le x\le \xi}$ for some inspection-feasible $\tau_0$, where $(\psi,\tau)$ solves $\syst(\tau_0)$, and 
$\xi=\xi(\tau_0)>1/2$ is the deployment parameter of $\tau_0$. 
Together with the deployment segment $O\mathcal T^{\tau_0}(\xi)$, the optimal cost to \adi\ is 
\begin{equation}
\label{equa: function on xi and t0}
\frac{1}{2\pi}\log\!\left(\frac{1+\sinn{\xi \pi}}{1-\sinn{\xi \pi}}\right)
+ \frac{\xi}{\coss{(1-\xi)\pi}}
+ 2\pi \int_0^{\xi} \frac{x \cdot \tau(x)}{\sinn{\psi(x)}} \dd x,
\end{equation}
minimized over all inspection-feasible $\tau_0 \in \mathcal I$, where 
$\theta = (1-\xi)\pi$ is the corresponding deployment angle. 
\end{theorem}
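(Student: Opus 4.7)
The plan is to combine a tangent-line parameterization of the inspection phase with an Euler--Lagrange (equivalently, Fermat-principle) argument, mirroring in the continuum the discrete Snell-law recurrences that arise when Fermat's Principle of Least Time is applied to the $k$-discretization of~\cite{conley2025multi}.

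Under the hypothesis that the inspection phase does not touch the disk, each inspected perimeter point admits a unique tangent line, and the trajectory meets this line at the first moment the point is seen. I would therefore parameterize the inspection phase by $x\in[0,\xi]$ with $P_x=(\cos(2\pi x),-\sin(2\pi x))$ and write
\[
\mathcal{T}(x) = P_x + \tau(x)\,\hat{t}(x),\qquad \hat{t}(x)=(-\sin(2\pi x),-\cos(2\pi x)),
\]
which matches Definition~\ref{def: tau trajectory} componentwise. Direct differentiation gives $\|\mathcal{T}'(x)\|^2 = (2\pi+\tau'(x))^2 + (2\pi\tau(x))^2$, and the $\tau$-equation of $\syst(\tau_0)$ is just the definition of $\psi(x)$ as the angle between $\mathcal{T}'(x)$ and $\hat t(x)$: from $\mathcal{T}'(x)\cdot\hat t(x)=2\pi+\tau'(x)$ one reads $\cos\psi(x)\,\|\mathcal T'(x)\| = 2\pi+\tau'(x)$, which together with $\|\mathcal T'(x)\|=2\pi\tau(x)/\sin\psi(x)$ rearranges to $\tau'(x)/(2\pi)=\tau(x)\cot\psi(x)-1$.

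Next I would decompose $\mathbb E_P[I_P]$ into three contributions matching~\eqref{equa: function on xi and t0}. For perimeter points $P_\phi$ inspected during the deployment segment $OA_\infty$, tangency gives first-hit time $1/\cos(\theta-\phi)$ with $\phi\in[0,2\theta]$, and integration produces $\frac{1}{2\pi}\ln\frac{1+\sin\theta}{1-\sin\theta}$, which matches the first term via $\sin((1-\xi)\pi)=\sin(\xi\pi)$. The remaining $\xi$-fraction of perimeter is inspected during the inspection phase; each such point costs at least $|OA_\infty|=1/\cos\theta$, summing to the second term $\xi/\cos((1-\xi)\pi)$. The extra arc length from $A_\infty$ to $\mathcal T(x)$, averaged over the inspected arc, reduces by Fubini to $\int_0^\xi x\,\|\mathcal T'(x)\|\,\dd x$, which via the arc-length identity above equals the third term $2\pi\int_0^\xi x\,\tau(x)/\sin\psi(x)\,\dd x$.

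To justify that the optimal $\mathcal T$ satisfies the $\psi$-equation, I would apply Euler--Lagrange to the Lagrangian $L(x,\tau,\tau')=x\sqrt{(2\pi+\tau')^2+(2\pi\tau)^2}$. Computing $\partial L/\partial\tau = 2\pi x\sin\psi$ and $\partial L/\partial\tau'=x\cos\psi$ yields $\cos\psi(x)=x\sin\psi(x)(\psi'(x)+2\pi)$, i.e.\ $\psi'(x)=-2\pi+\cot\psi(x)/x$; evaluating at $x=0$ forces $\cos\psi(0)=0$, so $\psi(0)=\pi/2$ is a necessary consequence of E--L at the singular point, while $\tau(0)=\tau_0$ remains as the single decision variable. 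The endpoint constraint $\mathcal T_1(\xi)=1$ attaches $A_\infty$ to the deployment phase on the line $x_1=1$, and $\xi>1/2$ is equivalent to $\theta<\pi/2$, needed for this line to be reachable from $O$. Equivalently, this is the continuum limit of the $k$-indexed Fermat/Snell recurrence at each intermediate point in the discretization of~\cite{conley2025multi}; after the natural rescaling the recurrence becomes $k$-independent and converges to $\syst(\tau_0)$. The main obstacle is legitimizing the tangent-line parameterization globally: one must use the non-touching hypothesis to show that $x\mapsto\mathcal T(x)$ is a bijection onto the inspection phase, and control the integrability and regularity at the singular endpoint $x=0$ (where both the cost weight vanishes and the $\psi$-equation has its $\cot\psi/x$ singularity), relying on Lemma~\ref{lem: wellposed}. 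With these in place, the three-term decomposition combined with the E--L identification of $(\psi,\tau)$ yields~\eqref{equa: function on xi and t0}, reducing the original optimization to the one-parameter problem over $\tau_0\in\mathcal I$.
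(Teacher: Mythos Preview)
Your proposal is correct and arrives at the same three-term cost formula and the same ODE system, but by a genuinely different route than the paper. The paper works entirely through the discretization: it first establishes the Snell recurrences~\eqref{eq: x_i}--\eqref{eq: d_i} for the finite optics problem (Lemma~\ref{lem: recursion for discrete theta}), then proves that the piecewise-linear interpolants of $(y_i,t_i)$ converge uniformly to the solution $(\psi,\tau)$ of $\syst(\tau_0)$ (Lemma~\ref{lem: continuum-limit}), that the polygonal chains converge to $\mathcal T^{\tau_0}$ (Lemma~\ref{lem: trajectory-representation}), and that the discrete cost $\tfrac{1}{k+1}\sum i\,d_i$ converges to the integral via a Riemann-sum argument (Lemma~\ref{lem: sol to cont partial}); the three-term formula then follows by substituting into the $B_\theta(s)$ expression of Theorem~\ref{thm: old partial to full}. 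You instead bypass the discrete intermediary: you parameterize the inspection phase directly in tangent coordinates, decompose $\mathbb E[I_P]$ by a direct Fubini computation, and derive the $\psi$-equation as the Euler--Lagrange equation of the one-dimensional variational problem with Lagrangian $x\sqrt{(2\pi+\tau')^2+(2\pi\tau)^2}$. Your approach is more economical and makes the variational structure transparent; the paper's detour through the discrete problem has the compensating advantage that optimality (Snell's law at each interface) and the boundary condition $y_0=\pi/2$ are finite-dimensional facts, so the regularity issues you flag at $x=0$ and the global bijectivity of the tangent parameterization never arise explicitly---they are inherited from the limit. One small point: your justification that ``evaluating at $x=0$ forces $\cos\psi(0)=0$'' is not a transversality condition (the natural boundary term $x\cos\psi|_{x=0}$ vanishes automatically); the cleaner argument is to integrate the Euler--Lagrange identity $\tfrac{d}{dx}(x\cos\psi)=2\pi x\sin\psi$ from $0$, which gives $x\cos\psi(x)=O(x^2)$ and hence $\psi(0)=\pi/2$ for any bounded solution.
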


We prove Theorem~\ref{thm: main result} in Section~\ref{sec: optimal control problem}. 
In Section~\ref{sec: proof of technical lemma} we show that the inspection phase of the optimal inspective trajectories do not touch the unit disk. This allows us to invoke Theorem~\ref{thm: main result} and optimize~\eqref{equa: function on xi and t0}, thereby 
solving the underlying one-parameter optimal control problem and establishing Theorem~\ref{thm: main optimal result}.

\section{Background Machinery}
\label{sec: background machinery}

We begin by reviewing the main tools introduced in~\cite{conley2025multi}, which we follow throughout Section~\ref{sec: background machinery} apart from minor reformulations. These ideas play only a preliminary role in our development, since they provide the initial reduction and notation on which our main arguments build. That earlier work proposed a discretized version of \adi\ together with a Nonlinear Programming (NLP) formulation whose solutions yield upper bounds. In Section~\ref{sec: partial average disk inspection} we introduce a related intermediate ``partial'' problem that simplifies the reduction from the continuous setting. Section~\ref{sec: discretize and NLP} then describes the discretized problem and concludes with a brief sketch of the NLP formulation, which, while not directly relevant here, was previously used to derive upper bounds to \adi.

\subsection{Reduction to the Partial Average-Case Disk-Inspection Problem}
\label{sec: partial average disk inspection}

It is convenient to host our arguments on the Cartesian plane and require that the inspecting curves have one endpoint at the origin, where the disk to be inspected is also centered.
More specifically, we parameterize the unit disk perimeter by points
\begin{equation}
\label{equa: pphi def}
P_\phi := \left(
\coss{\phi}, 
\sinn{\phi}
\right),
\end{equation}
where $\phi \in [0,2\pi)$. Therefore, the subject inspective trajectories are functions $T:[0,1]\to \reals^2$, with $T(0)=(0,0)$. 
Towards defining the discretized \adi, we first need to introduce a useful \emph{partial inspection} variant to the problem.

\begin{definition}[$\theta$-Inspective Curves]
\label{def: partial-inspective curve}
Let $\theta \in [0,\pi/2)$. 
A continuous and piece-wise differentiable curve segment in $\reals^2$ (i.e. a curve with two endpoints) is called \emph{$\theta$-inspective} if its convex hull contains a unit disk centered at a point which is $1/\coss{\theta}$ away from one of the curve's endpoints.
\end{definition}

The motivation for introducing $\theta$-inspective curves is that inspection may begin not from the disk center but from a point located at distance $1/\coss{\theta}$ from it. We reserve the term \emph{trajectory} for full solutions to \adi, while \emph{curve} refers to this partial variant. Placing the disk at the origin, we may assume that the starting point is $(1,\tann{\theta})$, which already inspects all boundary points $P_\phi$ with $\phi \in [0,2\theta]$ at zero cost. An example is illustrated in Figure~\ref{fig: optimal}, where the $\theta$-inspective curve has endpoints $A_\infty=(1,\tann{\theta})$ and $A_0$ on the line $x=1$. 

\begin{problem}[$\theta$-Average-Disk-Inspection -- $\theta$-\adi]
\label{prob: theta average cont}
Given $\theta \in [0,\pi/2)$, find a $\theta$-inspective curve that minimizes $\mathbb{E}_{\phi}[I_{P_\phi}]$, where the expectation is over $\phi \in [2\theta,2\pi]$ chosen uniformly at random.
\end{problem}

Any $\theta$-inspective curve can be extended to an inspective trajectory by appending the line segment from the origin to $(1,\tann{\theta})$ (previously referred to as the deployment phase). One of the results in~\cite{conley2025multi} was the following explicit relation between the two problems.

\begin{theorem}
\label{thm: old partial to full}
If $\theta$-\adi\ admits a solution of average cost $s=s(\theta)$, then \adi\ admits a solution of average cost 
$$
B_\theta(s) := \frac{1}{2\pi}
\log \left( \frac{1+\sinn{\theta}}{1-\sinn{\theta}} \right)
+
\left(1 - \frac{\theta}{\pi} \right) 
\left( 
\frac{1}{\coss{\theta}} + s
\right).
$$
\end{theorem}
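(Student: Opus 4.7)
\medskip

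\noindent\textbf{Proof plan for Theorem~\ref{thm: old partial to full}.}

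The plan is to convert a $\theta$-inspective curve of average cost $s$ into an inspective trajectory by prepending the deployment segment from the origin $O=(0,0)$ to $A_\infty = (1,\tann{\theta})$, whose length is $1/\coss{\theta}$, and then directly compute the resulting expectation by partitioning the perimeter into the arc already ``opportunistically'' inspected during deployment and the arc inspected along the appended curve. Concretely, the idea is to split the range of $\phi \in [0,2\pi)$ into $[0,2\theta]$ and $[2\theta, 2\pi]$, since the tangent lines from $A_\infty$ to the unit disk touch at exactly $P_0$ and $P_{2\theta}$ (the two tangent points are symmetric about the line $OA_\infty$, which has angle $\theta$, and satisfy $\cos(\angle POA_\infty) = 1/|OA_\infty| = \coss{\theta}$).

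First, I would analyze the deployment phase. Parameterize it as $A_r = (r\coss{\theta}, r\sinn{\theta})$ for $r \in [0, 1/\coss{\theta}]$. For $r \geq 1$, the tangent lines from $A_r$ to the unit disk touch the disk at angles $\theta \pm \arccoss{1/r}$, so $P_\phi$ becomes visible from $A_r$ precisely when $|\phi - \theta| \leq \arccoss{1/r}$, i.e.\ when $r \geq 1/\coss{|\phi-\theta|}$. Since the traversed distance at $A_r$ is exactly $r$, the first time $P_\phi$ is inspected during deployment equals $1/\coss{|\phi - \theta|}$; this lies in $[1, 1/\coss{\theta}]$ exactly when $\phi \in [0, 2\theta]$, so these are precisely the perimeter points inspected during the deployment phase.

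Second, for $\phi \in [2\theta, 2\pi]$ the inspection of $P_\phi$ must occur along the appended $\theta$-inspective curve. Writing $I'_{P_\phi}$ for the curve-relative inspection time (measured from $A_\infty$), the total inspection time from $O$ is $1/\coss{\theta} + I'_{P_\phi}$; by hypothesis, the conditional expectation of $I'_{P_\phi}$ with $\phi$ uniform on $[2\theta, 2\pi]$ equals $s$.

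Putting these together, condition on $\phi \in [0,2\theta]$ (probability $\theta/\pi$) and $\phi \in [2\theta, 2\pi]$ (probability $1 - \theta/\pi$):
\begin{align*}
\mathbb{E}_\phi[I_{P_\phi}]
&= \frac{1}{2\pi}\int_0^{2\theta}\secc{|\phi-\theta|}\,\dd\phi
+ \left(1-\frac{\theta}{\pi}\right)\!\left(\frac{1}{\coss{\theta}}+s\right)\\
&= \frac{1}{\pi}\int_0^{\theta}\secc{u}\,\dd u
+ \left(1-\frac{\theta}{\pi}\right)\!\left(\frac{1}{\coss{\theta}}+s\right).
\end{align*}
Using $\int_0^\theta \secc{u}\,\dd u = \log(\secc{\theta}+\tann{\theta})$ and the identity $(\secc{\theta}+\tann{\theta})^2 = \frac{1+\sinn{\theta}}{1-\sinn{\theta}}$, the first summand becomes $\frac{1}{2\pi}\log\!\left(\frac{1+\sinn{\theta}}{1-\sinn{\theta}}\right)$, yielding exactly $B_\theta(s)$. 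There is no real obstacle in this argument; the only subtle step is the visibility analysis of the deployment phase, i.e.\ identifying the threshold $r = 1/\coss{|\phi-\theta|}$ and verifying that for $\phi \in [0, 2\theta]$ this threshold is attained strictly inside the deployment segment while for $\phi \in [2\theta, 2\pi]$ it lies beyond it, so that the two cases partition the perimeter cleanly.
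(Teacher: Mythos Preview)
Your proposal is correct and follows exactly the approach the paper indicates: the paper does not prove Theorem~\ref{thm: old partial to full} itself but cites it from~\cite{conley2025multi}, offering only the one-line explanation that ``the logarithmic term accounts for the average cost of inspecting $P_\phi$ with $\phi \in [0,2\theta]$ during the initial deployment segment, while the remainder reflects the contribution of the $\theta$-inspective curve.'' You have supplied precisely the details behind that sentence---the visibility threshold $r=1/\coss{|\phi-\theta|}$ along the deployment ray, the resulting secant integral, and the identity $(\secc{\theta}+\tann{\theta})^2=(1+\sinn{\theta})/(1-\sinn{\theta})$---so there is nothing to add.
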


In this expression, the logarithmic term accounts for the average cost of inspecting $P_\phi$ with $\phi \in [0,2\theta]$ during the initial deployment segment, while the remainder reflects the contribution of the $\theta$-inspective curve. Thus, for each fixed $\theta$, the best partial cost $s_0(\theta)$ yields a full trajectory of cost $B_\theta(s_0)$. It follows that the optimal solution to \adi\ is given by
\begin{equation}
\label{equa: inf wrt theta}
\inf_{\theta \in [0,\pi/2)} B_\theta(s_0).
\end{equation}
The task is therefore reduced to selecting the deployment angle $\theta$ and designing the corresponding $\theta$-inspective curve of minimal average cost $s_0(\theta)$. The starting point for our work is the formulation that lead to the upper bound on $B_\theta(s_0)$ established in~\cite{conley2025multi}.

Theorem~\ref{thm: old partial to full} is the continuous analogue of Lemma~3 in~\cite{conley2025multi}, which is stated for a discrete inspection setting. The proof there is geometric and does not rely on discreteness: the total average cost splits into two independent contributions, namely (i) the average cost of inspecting the initial arc of length \(2\theta\) during the deployment phase, and (ii) the average cost of inspecting the remaining portion of the perimeter, regardless of whether it consists of finitely many or infinitely many points.

Every feasible ADI trajectory admits such a decomposition. Following the classical shoreline analysis of Isbell and Gluss, one may restrict attention to trajectories whose rim begins with a straight deployment segment from the origin to a point \((1,\tan\theta)\) for some \(\theta \in [0,\pi/2)\), and is angularly monotone thereafter. For a fixed \(\theta\), the contribution of the continuation to the total ADI cost depends only on the average cost of the corresponding \(\theta\)-ADI solution. If the continuation were not optimal for that \(\theta\)-ADI instance, replacing it by a better \(\theta\)-ADI curve would strictly improve the total ADI cost. Hence, the continuation of any optimal ADI solution must itself be an optimal solution of the associated \(\theta\)-ADI problem.

\subsection{Disk-Inspection via Discretization (and Nonlinear Programming)}
\label{sec: discretize and NLP}

We now introduce a discretized version of the partial inspection problem and show how it can be modeled as a nonconvex Nonlinear Program (NLP) with $\Theta(k)$ variables, yielding a $(1+1/k)$-approximation to the continuous $\theta$-inspection problem.

Fix $\theta \in [0,\pi/2)$ and $k \geq 5$. Define
\begin{equation}
\label{def:phii}
\phi_i := 2\pi - (\pi - \theta)\tfrac{2i}{k}, 
\quad i = 0,\ldots,k,
\end{equation}
and let $P_{\phi_i}$ denote the corresponding $k+1$ equidistant points on the arc of the unit circle of length $2\pi-2\theta$, see Figure~\ref{fig: DiscreteTrajectory}. For convenience, we will abbreviate $P_{\phi_i}$ by $P_i$ when the index meaning is clear from context.

We assume \(k \ge 5\) so that the angular spacing between consecutive discretization angles is strictly less than \(\pi/2\); this condition is required to ensure that feasibility of the discretized formulation implies feasibility of the corresponding continuous inspection trajectory, as already used in~\cite{conley2025multi}.

\begin{definition}[$(\theta,k)$-Inspective Curves]
\label{def: partial discrete -inspective curve}
A continuous piecewise-differentiable curve segment in $\reals^2$ is called \emph{$(\theta,k)$-inspective} if its convex hull contains all points $P_0,\ldots,P_k$ of a unit disk centered at a point $1/\coss{\theta}$ away from one endpoint of the curve.
\end{definition}

As $k\to\infty$, $(\theta,k)$-inspective curves approximate $\theta$-inspective curves. In particular, scaling a $(\theta,k)$-inspective curve by a factor $1+O(1/k)$ produces a $\theta$-inspective curve, which is one interpretation of the upper bounds derived in~\cite{conley2025multi}.

\begin{problem}[$(\theta,k)$-Average-Disk-Inspection -- $(\theta,k)$-\adi]
\label{prob: theta,k average discrete}
Given $\theta \in [0,\pi/2)$ and $k\geq 5$, find a $(\theta,k)$-inspective curve that minimizes $\mathbb{E}_i[I_{P_i}]$, where $i$ is chosen uniformly at random from $\{0,\ldots,k\}$.
\end{problem}

Let now $L_i(t)$ be the tangent line at $P_i$, parameterized by
\begin{equation}
\label{equa: parametric tangent line}
L_i(t):=
\begin{pmatrix}
\cos(\phi_i) \\
\sin(\phi_i)
\end{pmatrix}
+
t
\begin{pmatrix}
\sin(\phi_i) \\
-\cos(\phi_i)
\end{pmatrix}, \quad t \geq 0.
\end{equation}
Each line $L_i(t)$ passes through $P_i$ at $t=0$ and is tangent to the unit disk. For parameters $t_i \geq 0$, define $A_i := L_i(t_i)$. In particular, we fix $t_k = \tann{\theta}$ so that $A_k=(1,\tann{\theta})$. A candidate $(\theta,k)$-inspective curve is then the polygonal chain $
A_k \to A_{k-1} \to \cdots \to A_0,
$
see Figure~\ref{fig: DiscreteTrajectory}.

The next lemma expresses the average inspection cost in terms of the parameters $t_i$.

\begin{lemma}
\label{lem: cost to discrete}
Let $\theta \in [0,\pi/2)$, $k \geq 5$, and $t=(t_0,\ldots,t_k)\in\reals^{k+1}_{\geq 0}$ with $t_k=\tann{\theta}$. Define $A_i=L_i(t_i)$. Then the polygonal chain $A_kA_{k-1}\cdots A_0$ is $(\theta,k)$-inspective with average inspection cost
\begin{equation}
\label{cost to theta-k inspection}
C_{\theta,k}(t) :=
\frac{1}{k+1}\sum_{i=1}^k i \|A_i - A_{i-1}\|.
\end{equation}
\end{lemma}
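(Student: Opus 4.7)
The lemma splits into two parts: (i) feasibility, that the polygonal chain $A_kA_{k-1}\cdots A_0$ is $(\theta,k)$-inspective, and (ii) the closed-form average inspection cost $C_{\theta,k}(t)$. For (i), I would verify convex-hull containment of each $P_i$ directly from the construction: writing $A_i=P_i+t_i(\sinn{\phi_i},-\coss{\phi_i})$, one checks $\langle A_i,P_i\rangle=1$, so $A_i$ lies on the tangent to the unit disk at $P_i$, on the outer side because $t_i\geq 0$. Consecutive tangent lines $L_i$ and $L_{i-1}$ meet outside the disk, so the polygon $A_0A_1\cdots A_k$ circumscribes the discretization arc $P_0P_1\cdots P_k$, and its convex hull contains every $P_i$.

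For (ii), the main ingredient is a clean visibility criterion: a point $Q\in\reals^2$ inspects $P_i$ if and only if $\langle Q,P_i\rangle\geq 1$. The proof sets $g(s):=\|(1-s)Q+sP_i\|^2-1$ and observes $g(1)=0$ together with $g'(1)=2(1-\langle Q,P_i\rangle)$: when $\langle Q,P_i\rangle<1$, $g$ dips below zero just before $s=1$, so the segment $QP_i$ enters the open disk; when $\langle Q,P_i\rangle\geq 1$, $Q$ lies in the closed outer half-plane of the tangent at $P_i$, which meets the disk only at $P_i$. Since $\langle A_i,P_i\rangle=1$, the vertex $A_i$ inspects $P_i$. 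A short computation also gives
$$
\langle A_j,P_i\rangle=\coss{\phi_j-\phi_i}+t_j\sinn{\phi_j-\phi_i},
$$
and the angular spacing of~\eqref{def:phii} together with $t_j\geq 0$ yields $\langle A_j,P_i\rangle<1$ for $j>i$. Because the outer half-plane at $P_i$ is convex, the entire portion of the chain from $A_k$ back through $A_{i+1}$ fails the visibility condition, so the first-inspection time of $P_i$ equals the arc length $d_i:=\sum_{j=i+1}^k\|A_j-A_{j-1}\|$ from $A_k$ to $A_i$.

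Once $I_{P_i}=d_i$ is established, swapping the order of summation gives
$$
\sum_{i=0}^k I_{P_i}=\sum_{i=0}^k\sum_{j=i+1}^k\|A_j-A_{j-1}\|=\sum_{j=1}^k j\,\|A_j-A_{j-1}\|,
$$
since segment $A_jA_{j-1}$ appears in the inner sum once for each $i\in\{0,\ldots,j-1\}$. Dividing by $k+1$ recovers $C_{\theta,k}(t)$.

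The main obstacle is the sign analysis certifying $\langle A_j,P_i\rangle<1$ for all $j>i$. The clean argument uses $\phi_j-\phi_i\in(-\pi,0)$, where $\sinn{\phi_j-\phi_i}<0$ and $\coss{\phi_j-\phi_i}<1$; but the total angular span of the discretization is $2(\pi-\theta)$, which can exceed $\pi$, so for widely separated indices the gap may fall outside $(-\pi,0)$ and the naive sign bound breaks. In particular the boundary identity $\langle A_k,P_0\rangle=1$, forced by $t_k=\tann{\theta}$, shows the inequality is tight at the extremes. This indicates that the formula $C_{\theta,k}(t)$ is meant under the discrete inspection convention in which $P_i$ is credited as inspected exactly at vertex $A_i$ — the model used in the NLP formulation of~\cite{conley2025multi} — so the edge cases are subsumed by convention rather than genuine geometric failures.
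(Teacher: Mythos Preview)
Your double-counting argument is exactly the paper's two-sentence proof: ``Points $A_i\in L_i$ inspect $P_i$'' for feasibility, and ``each segment $A_{i-1}A_i$ contributes to the inspection time of exactly $i$ points'' for the cost. You go further than the paper and try to justify the second assertion from the geometric visibility criterion, and the obstacle you flag is genuine. For $C_{\theta,k}(t)$ to equal the true average of first-inspection times one needs that no point of the chain strictly before $A_i$ sees $P_i$; yet your boundary identity $\langle A_k,P_0\rangle=1$ shows $A_k$ already sees $P_0$, and once $\phi_j-\phi_i$ drops below $-\pi$ the sign of $\sin(\phi_j-\phi_i)$ flips, so $\langle A_j,P_i\rangle\ge1$ can also occur for other pairs whenever $t_j$ is large enough. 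The paper never confronts this. It simply adopts the reading you arrive at, treating ``$P_i$ is inspected at vertex $A_i$'' as the discrete schedule inherited from the NLP objective of~\cite{conley2025multi} rather than deriving it from visibility. That is harmless for the paper's purposes, since the distinction is a boundary effect that disappears in the continuum limit on which all the main theorems rest, but your diagnosis that the exact equality in the lemma is a modeling convention rather than a proved geometric identity is correct.

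A minor aside: your circumscription sketch for convex-hull feasibility is at the same level of rigor as the paper's one-line claim; neither actually establishes $P_i\in\mathrm{conv}\{A_0,\dots,A_k\}$ for arbitrary $t_i\ge0$, though this plays no role downstream.
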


\begin{proof}
Points $A_i \in L_i$ inspect $P_i$, so the polygonal chain is $(\theta,k)$-inspective. Each segment $A_{i-1}A_i$ contributes to the inspection time of exactly $i$ points, and averaging over $k+1$ points gives~\eqref{cost to theta-k inspection}.
\end{proof}

It was shown in~\cite{conley2025multi} that if $t_i=\Omega(1/k)$ for all $i$, then the polygonal chain induces a $\theta$-inspective curve for the corresponding continuous instance. Under this feasibility requirement, the discrete average
$C_{\theta,k}(t)$ is a Riemann-sum approximation of the continuous average inspection cost associated with the induced $\theta$-inspective curve, since inspection times vary continuously with the endpoint angle at scale $O(1/k)$. In particular, optimizing $C_{\theta,k}(t)$ over feasible discretizations yields asymptotically tight upper bounds on the continuous $\theta$-\adi\ cost, as used in~\cite{conley2025multi}.

The expression $C_{\theta,k}(t)$ is non-convex in the variables $t$ and $\theta$. Under the feasibility constraints $t_i=\Omega(1/k)$, it serves as the objective of a nonlinear program characterizing the cost of discretized solutions to \adi. In what follows, we show how this program can be reduced to a single-parameter optimization problem and analyze its limiting behavior as $k\to\infty$, resulting in the promised \spocp.

\section{Fermat's Principle Solves $(\theta,k)$-\adi}
\label{sec: fermat gives recursion}

In this section we characterize optimal trajectories to $(\theta,k)$-\adi\ and their cost via a recursion based on Fermat's Principle. 
This marks the beginning of our new contributions. 

\subsection{The Principle of Least Time}
\label{sec: fermat}

We begin by introducing Snell's Law, along with terminology that will be used in subsequent sections. The exposition starts with the \emph{Principle of Least Time}, also known as \emph{Fermat's Principle}, which postulates that the trajectory of a light ray between two given points is the one that minimizes travel time. The principle has been confirmed through experimental observation and explains the rules of refraction in ray optics. 

\begin{figure}[h!]
    \centering
    \begin{minipage}{5cm}
        \includegraphics[width=\linewidth]{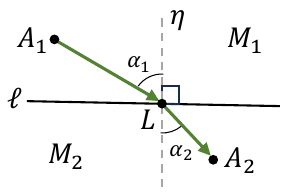}
    \end{minipage}%
    \hspace{1cm}%
    \begin{minipage}{0.55\textwidth}
\caption{
Light refraction across two media $M_1$ and $M_2$, separated by interface line~$\ell$ with normal~$\eta$. A ray from $A_1$ in $M_1$ crosses $\ell$ at $L$ and continues in $M_2$ towards $A_2$, forming angles $\alpha_1,\alpha_2$ with~$\eta$. The same trajectory also applies in reverse, illustrating Snell's Law (Theorem~\ref{thm: snelllaw}).
}
        \label{fig: snelllaw}
    \end{minipage}
\end{figure}

Consider two media with constant speeds $s_1,s_2$, separated by a line $\ell$ with normal $\eta$; see Figure~\ref{fig: snelllaw}. Assume light has constant speed in each medium, hence it travels along a straight ray within each. A ray from $A_1$ to $A_2$ refracts at $L\in\ell$, forming angles $\alpha_1,\alpha_2$ with normal~$\eta$ and obeying Snell's Law below. For simplicity, we refer to phase velocity as speed.

\begin{theorem}[Snell's Law]
\label{thm: snelllaw}
If the speed of light in $M_1,M_2$ is $s_1,s_2$, respectively, then the incidence angle $\alpha_1$ and refraction angle $\alpha_2$ satisfy
\begin{equation*}
\label{equa: snell condition}
\frac{\sinn{\alpha_1}}{\sinn{\alpha_2}} = \frac{s_1}{s_2}. 
\end{equation*}
\end{theorem}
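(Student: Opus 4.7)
The plan is to derive Snell's Law directly from the Principle of Least Time by parameterizing the crossing point of the ray on the interface and minimizing the resulting travel time. First I would fix coordinates with $\ell$ along a horizontal axis and the normal $\eta$ vertical, and let $d_1,d_2>0$ denote the perpendicular distances from $A_1,A_2$ to $\ell$, with $D$ denoting the offset along $\ell$ between the feet of these perpendiculars. The crossing point $L\in\ell$ is parameterized by its signed distance $x$ along $\ell$ from the foot of the perpendicular dropped from $A_1$, so that $L$ lies at signed distance $D-x$ from the foot of the perpendicular dropped from $A_2$.

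Second, since the speed of light is constant within each medium, Fermat's Principle already forces the ray to travel along straight segments on each side of $\ell$: with fixed endpoints and constant speed, travel time is proportional to Euclidean distance, which is uniquely minimized by the segment. Consequently the total travel time from $A_1$ to $A_2$ via $L$ is
\begin{equation*}
T(x) \;=\; \frac{\sqrt{d_1^2+x^2}}{s_1} \;+\; \frac{\sqrt{d_2^2+(D-x)^2}}{s_2}.
\end{equation*}

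Third, I would apply Fermat's Principle to the full trajectory: the physical ray must minimize $T$ over $x\in\reals$. Each summand is strictly convex in $x$ because $x\mapsto\sqrt{a^2+x^2}$ has positive second derivative, so $T$ is strictly convex and admits a unique global minimum, characterized by $T'(x)=0$. Differentiating gives the stationarity condition
\begin{equation*}
\frac{x}{s_1\,\sqrt{d_1^2+x^2}} \;=\; \frac{D-x}{s_2\,\sqrt{d_2^2+(D-x)^2}}.
\end{equation*}

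Finally I would translate this into angles by reading off Figure~\ref{fig: snelllaw}. The right triangle with vertices $A_1$, the foot of its perpendicular, and $L$ has hypotenuse $\sqrt{d_1^2+x^2}$ and leg $x$ opposite the angle $\alpha_1$ that $A_1L$ makes with the normal, so $\sinn{\alpha_1}=x/\sqrt{d_1^2+x^2}$; analogously $\sinn{\alpha_2}=(D-x)/\sqrt{d_2^2+(D-x)^2}$. Substituting into the stationarity condition yields $\sinn{\alpha_1}/s_1=\sinn{\alpha_2}/s_2$, which rearranges to the claimed identity. The derivation is classical and I do not anticipate a substantive obstacle; the only care required is the sign convention for $x$ and the placement of $A_1,A_2$ on opposite sides of $\ell$, so that both $\sinn{\alpha_i}$ are nonnegative and the geometric identification of $\alpha_i$ as the angle with the normal matches the figure.
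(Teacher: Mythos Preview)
Your proposal is correct and follows essentially the same approach as the paper: the paper derives Snell's Law (within the proof of Lemma~\ref{lem: snell optimal}) by placing $\ell$ along an axis, writing the travel time as $T(x)=\tfrac{\sqrt{(x-a_1)^2+b_1^2}}{s_1}+\tfrac{\sqrt{(x-a_2)^2+b_2^2}}{s_2}$, checking $T''>0$ for strict convexity, and reading off $\sin\alpha_i$ from the first-order condition. Your parameterization via perpendicular distances $d_1,d_2$ and offset $D$ is a cosmetic relabelling of the same computation.
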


Although Snell's Law is an experimental law of optics, it can be derived rigorously from Fermat's Principle. We present the formal claim next, and we prove it in Appendix~\ref{sec: Proofs Omitted from Section sec: fermat}.

\begin{lemma}
\label{lem: snell optimal}
Let $M_1,M_2$ be two media with constant speeds $s_1,s_2$. Among all continuous paths connecting $A_1\in M_1$ to $A_2\in M_2$, the unique trajectory minimizing travel time is 
piecewise-linear path made of two straight segments $A_1L^\ast,L^\ast A_2$, where $L^\ast\in\ell$ is chosen so that refraction at $L^\ast$ satisfies Snell’s Law.
\end{lemma}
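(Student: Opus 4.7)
The plan is to reduce the minimization to a one-variable strictly convex optimization over crossing points $L\in\ell$, and then identify its first-order condition as Snell's Law. First I would show that any time-minimizing path can be assumed to be piecewise linear with a single crossing of $\ell$. Because the speed is constant in each medium, the travel time of any subarc contained in $M_j$ equals its length divided by $s_j$, and the triangle inequality forces the optimal subarc joining two points of $M_j$ to be the straight segment between them; this collapses the path inside each medium to straight pieces. For the number of crossings, I would argue that any path crossing $\ell$ more than once can be strictly shortened in at least one medium by deleting a detour and replacing it with a direct chord, contradicting optimality. This leaves exactly one crossing $L\in\ell$ and reduces the problem to minimizing the travel time of the two-segment path $A_1 L A_2$.

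Next I would set coordinates with $\ell$ as the $x$-axis, normal $\eta=(0,1)$, $A_1=(x_1,y_1)$ with $y_1>0$, and $A_2=(x_2,y_2)$ with $y_2<0$. For $L=(x,0)$ the travel time is
\[
T(x)=\frac{\sqrt{(x-x_1)^2+y_1^2}}{s_1}+\frac{\sqrt{(x-x_2)^2+y_2^2}}{s_2}.
\]
Each summand is strictly convex (it is the composition of the strictly convex function $u\mapsto\sqrt{u^2+c^2}$ with an affine map) and $T(x)\to\infty$ as $|x|\to\infty$, so $T$ admits a unique minimizer $x^{\ast}$, yielding both existence and uniqueness of $L^{\ast}=(x^{\ast},0)$.

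Finally I would verify that the stationarity condition at $x^{\ast}$ is exactly Snell's Law. Differentiating gives
\[
T'(x)=\frac{x-x_1}{s_1\sqrt{(x-x_1)^2+y_1^2}}+\frac{x-x_2}{s_2\sqrt{(x-x_2)^2+y_2^2}},
\]
and the two ratios equal, up to sign, $\sin\alpha_1/s_1$ and $\sin\alpha_2/s_2$; their signs are opposite at the minimizer, since $A_1$ and $A_2$ lie on opposite sides of $\ell$ and the rays $L^{\ast}A_1, L^{\ast}A_2$ therefore deviate from $\eta$ on opposite horizontal sides. Setting $T'(x^{\ast})=0$ reduces to $\sin\alpha_1/s_1=\sin\alpha_2/s_2$, which is Snell's Law, completing the proof.

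The main obstacle I expect is making the reduction step fully rigorous for arbitrary continuous paths. Non-rectifiable candidates can be discarded as having infinite travel time, but one must still treat paths whose intersection with $\ell$ is not a discrete set and paths that oscillate across the interface. A clean way around this is to restrict attention to rectifiable paths, replace $\ell$-touching subarcs by straight chords without increasing travel time, and only then apply the triangle inequality to reduce to finitely many (and ultimately one) crossing. The remaining steps — convexity of $T$ and the Snell identification — are essentially a direct calculation.
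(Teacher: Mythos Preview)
Your proposal is correct and follows essentially the same approach as the paper: reduce to a single crossing by replacing subarcs with straight segments, set up coordinates with $\ell$ as the $x$-axis, write the travel time as a one-variable function $T(x)$, show strict convexity (the paper computes $T''>0$ directly rather than using your composition argument), and identify the first-order condition $T'(x^\ast)=0$ with Snell's Law. Your discussion of the reduction step is more careful than the paper's, which dispatches it in two sentences.
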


\subsection{Optimal Solution to $(\theta,k)$-\adi\ via Recursion}
\label{sec: optimal sol to theta,k inspection via recursion}

In this section we compute the optimal solution to $(\theta,k)$-\adi\ and its cost using recursion derived from the optics principles of the previous section. 

\begin{figure}[h!]
    \centering
    \begin{minipage}{8cm}
        \includegraphics[width=\linewidth]{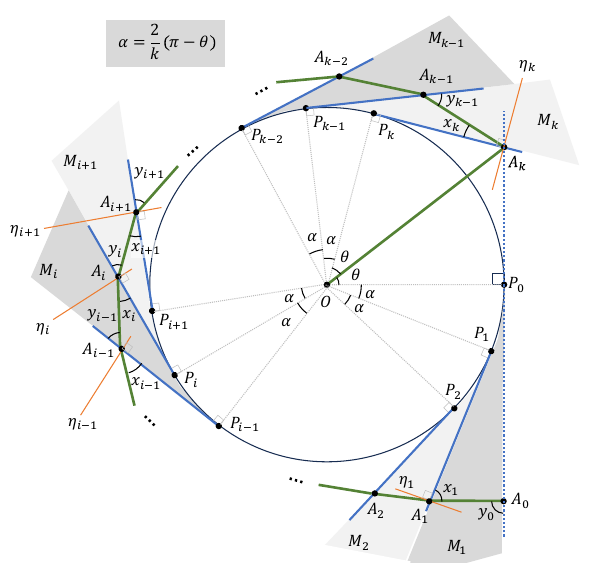}
    \end{minipage}\hspace{1cm}%
    \begin{minipage}{0.4\textwidth}
        \caption{
Geometric setup of the discrete trajectory for $(\theta,k)$-\adi. 
Points $P_i$ on the perimeter, tangent lines $L_i$, and trajectory points $A_i$ define angles $x_i,y_i$ and distances $d_i$. 
The regions $M_i$ represent optical media used in the refraction based interpretation leading to the recursions of Lemma~\ref{lem: recursion for discrete theta}.
}
        \label{fig: DiscreteTrajectory}
    \end{minipage}
\end{figure}

Fix $\theta \in [0,\pi/2)$ and $k\in \integers$, see Figure~\ref{fig: DiscreteTrajectory}. 
A solution to $(\theta,k)$-\adi\ is determined by values $t_i\geq 0$ that specify points $A_i=L_i(t_i)$ on tangent halflines $L_i(t)$, $i=0,\ldots,k$, which in turn inspect perimeter points $P_i$. 
For given $t_i$ we define the following counterclockwise angles:
\begin{align*}
x_i &:= \text{angle formed by } A_iA_{i-1} \text{ and } A_iP_i, \quad i=1,\ldots,k,\\
y_i &:= \text{angle formed by } A_iA_{i+1} \text{ and } L_i(t),~ t\geq t_i, \quad i=0,\ldots,k-1,
\end{align*}
and distances
$$
d_i := \|A_i-A_{i-1}\|, \quad i=1,\ldots,k.
$$
Let also denote
$
\alpha := \tfrac{2(\pi-\theta)}{k},
$
that is, $\alpha$ is the angular distance between two consecutive points $P_i,P_{i+1}$ on the perimeter. 

\begin{lemma}
\label{lem: recursion for discrete theta}
Given $\theta \in [0,\pi/2)$ and $k\geq 5$, suppose an optimal trajectory to $(\theta,k)$-\adi\ is identified by $A_i=L_i(t_i)$ with $t_i \geq \tann{\tfrac{\alpha}{2}}$ (so it does not intersect the unit disk).
\footnote{The condition $t_i \ge \tann{\tfrac{\alpha}{2}}$ is the standard geometric feasibility requirement in the $(\theta,k)$-inspection framework of~\cite{conley2025multi}, ensuring that the discrete trajectory does not intersect the unit disk. We note that in the parameter ranges relevant to our later analysis and numerical evaluations (with $k$ large), the optimized values of $t_i$ are observed to be bounded well away from $0$, so this condition is comfortably satisfied in practice.}
Then the trajectory and its cost are characterized by the recursions
\begin{align}
x_i &= y_{i-1}-\alpha, \label{eq: x_i}\\
y_i &= \arccoss{\tfrac{i}{i+1}\coss{y_{i-1}-\alpha}}, \label{eq: y_i}\\
t_i &= \Big(t_{i-1}-\tann{\tfrac{\alpha}{2}}\Big)\tfrac{\sinn{y_{i-1}}}{\sinn{x_i}}-\tann{\tfrac{\alpha}{2}}, \label{eq: t_i}\\
d_i &= \Big(t_{i-1}-\tann{\tfrac{\alpha}{2}}\Big)\tfrac{\sinn{\alpha}}{\sinn{x_i}}, \label{eq: d_i}
\end{align}
for $i=1,\ldots,k$. The initial conditions are $y_0=\pi/2$ and $t_k=\tann{\theta}$. 
\end{lemma}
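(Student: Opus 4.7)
My plan is to derive the four recursions \eqref{eq: x_i}--\eqref{eq: d_i} by separating them into (i) geometric identities for the straight segment $A_{i-1}A_i$ obtained from the triangle whose apex is the intersection of consecutive tangent lines, and (ii) a refraction identity at each interior vertex $A_i$ obtained from Fermat's Principle. Throughout, I interpret segment $i$ as a light ray in an optical medium $M_i$ of speed $1/i$, so that the weighted cost $\sum_{i=1}^{k} i\,d_i$ equals the travel time of light along the polyline, placing the problem squarely in the setting of Snell's Law.

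For the geometric identities \eqref{eq: x_i}, \eqref{eq: t_i}, \eqref{eq: d_i}, I would introduce for each $i$ the intersection point $X_i$ of the consecutive tangent lines $L_{i-1}$ and $L_i$. Using \eqref{equa: parametric tangent line} together with $\phi_{i-1}-\phi_i=\alpha$, a direct calculation places $X_i$ at parameter $\tan(\alpha/2)$ along $L_{i-1}$ (measured from $P_{i-1}$) and at parameter $-\tan(\alpha/2)$ along $L_i$, with the two tangent directions meeting at $X_i$ at angle $\alpha$. The hypothesis $t_i\geq\tan(\alpha/2)$ places $A_{i-1}$ and $A_i$ past $X_i$ on their respective tangent rays, so $A_{i-1}X_iA_i$ is a non-degenerate triangle with side lengths $t_{i-1}-\tan(\alpha/2)$, $t_i+\tan(\alpha/2)$, and $d_i$. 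Its interior angles are $\pi-y_{i-1}$ at $A_{i-1}$ (since $A_{i-1}X_i$ points opposite to the reference direction used by $y_{i-1}$), $\alpha$ at $X_i$ (from the tangent rotation between $L_{i-1}$ and $L_i$), and $x_i$ at $A_i$ (directly, since $A_iX_i$ coincides with $A_iP_i$ as $P_i$ lies strictly between $X_i$ and $A_i$ on $L_i$). The angle-sum identity then yields \eqref{eq: x_i}, and the Law of Sines produces \eqref{eq: t_i} and \eqref{eq: d_i} after a short rearrangement.

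For the refraction identity \eqref{eq: y_i}, I would invoke Fermat's Principle at each interior vertex $A_i$ with $1\leq i\leq k-1$, where segments $i$ and $i+1$ meet across the interface $L_i$ between media $M_i$ and $M_{i+1}$. Holding all other vertices fixed and differentiating $i\,d_i+(i+1)\,d_{i+1}$ with respect to $t_i$, the derivatives come out to $\cos x_i$ and $-\cos y_i$ respectively, so the first-order optimality condition becomes $i\cos x_i=(i+1)\cos y_i$. This is Snell's Law in the present setting---equivalent to invoking Lemma~\ref{lem: snell optimal} locally with all other vertices held fixed---and substituting \eqref{eq: x_i} yields the closed recursion \eqref{eq: y_i}. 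The boundary conditions close the system: $t_k=\tan\theta$ is forced by $A_k=(1,\tan\theta)$ in the $\theta$-inspection setup, while $y_0=\pi/2$ comes from the free-boundary condition $\partial d_1/\partial t_0=0$ at the unconstrained endpoint $A_0\in L_0$, equivalent to $A_0A_1\perp L_0$.

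The main obstacle I anticipate is the angle bookkeeping in the geometric step: verifying that the interior angles of triangle $A_{i-1}X_iA_i$ are precisely $\pi-y_{i-1},\alpha,x_i$ rather than variants differing by a sign, which rests on the CCW wrapping of the trajectory around the disk induced by the convention \eqref{def:phii}. The hypothesis $t_i\geq\tan(\alpha/2)$ is exactly what guarantees that $X_i$ lies between $P_{i-1}$ and $A_{i-1}$ along $L_{i-1}$, while $A_i$ lies on the opposite side of $P_i$ from $X_i$ along $L_i$, so the triangle is proper with the claimed angle assignment. Once this orientation is pinned down, all four recursions follow from elementary trigonometry combined with a single application of Fermat's Principle per interior vertex.
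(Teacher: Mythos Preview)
Your proposal is correct and follows essentially the same approach as the paper: the paper also introduces the intersection point (there called $R_i$) of $L_{i-1}$ and $L_i$, reads off the angles $x_i,\pi-y_{i-1},\alpha$ in triangle $A_iA_{i-1}R_i$ to get \eqref{eq: x_i}, applies the Sine Law to obtain \eqref{eq: t_i} and \eqref{eq: d_i}, and derives \eqref{eq: y_i} from Snell's Law across the interface $L_i$ between media of speeds $1/i$ and $1/(i+1)$, with the same boundary conditions $t_k=\tann{\theta}$ and $y_0=\pi/2$ from the orthogonal-projection argument at $A_0$. The only cosmetic difference is that the paper invokes Lemma~\ref{lem: snell optimal} directly rather than re-deriving the refraction condition by differentiating $i\,d_i+(i+1)\,d_{i+1}$, but you already note this equivalence.
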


\begin{proof}
We work in Cartesian coordinates with the unit disk centered at the origin and tangent to the line $x=1$ at $P_0=(1,0)$. 
Points $P_i$ (see~\eqref{equa: pphi def},~\eqref{def:phii}) and tangent lines $L_i(t)$ (see~\eqref{equa: parametric tangent line}) are as defined earlier. 
By Lemma~\ref{lem: cost to discrete}, a trajectory is given by $A_i=L_i(t_i)$ with $t_i\geq 0$, and the optimal cost is obtained by minimizing $C_{\theta,k}(t)$ over $t\in \reals_{\ge 0}^{k+1}$. 
This yields the path $A_k\to A_{k-1}\to\cdots\to A_0$.

\begin{figure}[h!]
  \centering
  \includegraphics[width=7cm]{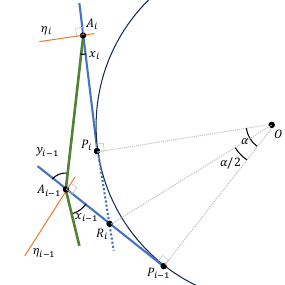}
  \caption{
Close-up detail of Figure~\ref{fig: DiscreteTrajectory} showing point $R_i$ obtained by the intersection of the lines tangent on the disc at points $P_i,P_{i-1}$. 
  }
  \label{fig: DiscreteTrajectory-refined}
\end{figure}

Consider triangle $A_iA_{i-1}R_i$, where $R_i$ is the intersection of the extensions of halflines $L_i$ and $L_{i-1}$ to full lines, see
Figure~\ref{fig: DiscreteTrajectory-refined}. 
By the definition of $t_i$, we have that $\|P_{i-1}A_{i-1}\|=t_{i-1}$. 

Moreover, $R_i$ is the intersection of two tangents at points $P_{i-1},P_i$ whose central angle is $\alpha$. Since $\ell_{i-1}\perp OP_{i-1}$ and $\ell_i\perp OP_i$, the line $OR_i$ bisects $\angle P_{i-1}OP_i$, hence $\angle P_{i-1}OR_i=\alpha/2$. In the right triangle $OP_{i-1}R_i$ we obtain $\|P_{i-1}R_i\|=\tann{\tfrac{\alpha}{2}}$. 
Since the lemma assumes $t_j\ge \tann{\tfrac{\alpha}{2}}$ for all $j$, in particular $t_{i-1}\ge \tann{\tfrac{\alpha}{2}}$, we have that $\|R_iA_{i-1}\|=t_{i-1}-\tann{\tfrac{\alpha}{2}}\ge 0$.
The angle at $A_i$ is $x_i$, the angle at $A_{i-1}$ is $\pi-y_{i-1}$, and the angle at $R_i$ equals $\alpha$ (from quadrilateral $OP_iR_iP_{i-1}$). 
Thus $x_i+(\pi-y_{i-1})+\alpha=\pi$, which gives~\eqref{eq: x_i}.

We interpret the trajectory as an optical path through media $M_i:=L_i^+\cap L_{i-1}^-$ with speeds $s_i:=1/i$, where $L_i^+$ is the halfspace bounded by $L_i$ not containing the disk (other than $P_i$) and $L_i^-$ is the complementary halfspace. 
Each segment $A_{i-1}A_i$ lies in $M_i$, and the total travel time is
$$
\sum_{i=1}^k \frac{\|A_i-A_{i-1}\|}{s_i}
=\sum_{i=1}^k i \|A_i-A_{i-1}\|
=\sum_{i=1}^k i d_i,
$$
which, by Lemma~\ref{lem: cost to discrete}, is proportional to $C_{\theta,k}(t)$ by a factor independent of $t$. Hence, by Lemma~\ref{lem: snell optimal}, refraction holds at each interface.
Indeed, fix $i\in\{1,\ldots,k-1\}$ and consider the interface line associated with $L_i$ separating the two media $M_i$ and $M_{i+1}$. If we keep the points $A_{i-1}\in L_{i-1}$ and $A_{i+1}\in L_{i+1}$ fixed, then any feasible trajectory induces a unique crossing point $A_i\in L_i$, and the contribution of the two adjacent segments to the travel time equals
$$
i\|A_i-A_{i-1}\|+(i+1)\|A_{i+1}-A_i\|.
$$
All other terms in the total travel time are independent of the choice of $A_i$. Since the trajectory is optimal, the chosen point $A_i$ must minimize the above expression over $A_i\in L_i$. Therefore Lemma~\ref{lem: snell optimal} applies at $L_i$, and the segment directions satisfy Snell's law between $M_i$ and $M_{i+1}$. If refraction failed at some interface, replacing $A_i$ by the minimizer given by Lemma~\ref{lem: snell optimal} would strictly decrease the total travel time, contradicting optimality.

At the interface between $M_i$ ($s_i=1/i$) and $M_{i+1}$ ($s_{i+1}=1/(i+1)$), the incidence and refraction angles with respect to the normal are
$
\alpha_1=\tfrac{\pi}{2}-x_i$ and $\alpha_2=\tfrac{\pi}{2}-y_i$. 
Therefore, by Snell’s Law (Theorem~\ref{thm: snelllaw}), we have 
$
\tfrac{\sinn{\alpha_1}}{\sinn{\alpha_2}}=\tfrac{s_i}{s_{i+1}}=\tfrac{i+1}{i}.
$
Since $\sinn{\tfrac{\pi}{2}-z}=\coss{z}$, this becomes
$
\tfrac{\coss{x_i}}{\coss{y_i}}=\tfrac{i+1}{i}
$,
which together with~\eqref{eq: x_i} yields~\eqref{eq: y_i} .

In triangle $A_iA_{i-1}R_i$ we have $\|R_iP_i\|=\|P_{i-1}R_i\|=\tann{\tfrac{\alpha}{2}}$ and $t_i=\|P_iA_i\|$, hence
$$
\|R_iA_i\|=t_i+\tann{\tfrac{\alpha}{2}},\qquad
\|R_iA_{i-1}\|=t_{i-1}-\tann{\tfrac{\alpha}{2}}.
$$
The angle at $A_{i-1}$ equals $\pi-y_{i-1}$, so $\sinn{\pi-y_{i-1}}=\sinn{y_{i-1}}$. By the Sine Law,
$$
\frac{\sinn{y_{i-1}}}{ t_i+\tann{\tfrac{\alpha}{2}} }
=
\frac{\sinn{x_i}}{ t_{i-1}-\tann{\tfrac{\alpha}{2}} }
=
\frac{\sinn{\alpha}}{ d_i },
$$
which implies~\eqref{eq: t_i} and~\eqref{eq: d_i}.

For the initial conditions, $t_k=\tann{\theta}$ holds by construction. 
At the other end, optimality requires that the trajectory satisfies both the refraction rule at every interface and, for the final segment, minimize $\|A_0-A_1\|$ over $A_0\in L_0$. 
This minimization places $A_0$ at the orthogonal projection of $A_1$ onto $L_0$, so the angle between $A_0A_1$ and $L_0$ equals $\pi/2$, that is $y_0=\pi/2$.
\end{proof}

\section{Single-Parameter Reformulation and Continuum Limit}
\label{sec: Single-Parameter Reformulation and Continuum Limit}

The backbone of the argument relies on~\eqref{equa: inf wrt theta}, which states that the optimal cost to \adi\ can be computed as $\inf_{\theta \in [0,\pi/2)} B_\theta(s)$, where $s=s(\theta)$ is the optimal solution to $\theta$-\adi. 
We show that $B_\theta(s)$ can, under suitable conditions, be minimized as a \spocp\ in  parameter $\tau_0$. Thus our first task is to verify that these conditions hold for the deployment parameter $\theta$ and the corresponding $\tau_0$ that generate solutions to $\syst(\tau_0)$, and then to make the relation between $\theta$ and $\tau_0$ explicit. 

Section~\ref{sec: optimal control problem} formulates the minimization of $B_\theta(s(\theta))$ as a $\spocp(\tau_0)$. 
Section~\ref{sec: Restriction of Deployment Angles} restricts the range of deployment angles $\theta$ so that the conditions apply. 
Section~\ref{sec: proof of technical lemma} relates $\theta$ and $\tau_0$ and solves the resulting $\spocp(\tau_0)$ numerically. 
Section~\ref{sec:numerics} discusses the robustness of these numerical results.

\subsection{A Single Parameter Optimal Control Problem -- Proof of Theorem~\ref{thm: main result}}
\label{sec: optimal control problem}

We now prove Theorem~\ref{thm: main result}. The starting point is the recurrence of Lemma~\ref{lem: recursion for discrete theta}, which describes the optimal inspective curve as an optics trajectory with refraction angles independent of $t_k=\tann{\theta}$. The deployment angle $\theta$ still determines the trajectory and its cost, and fixes the endpoint $A_0=L_0(t_0)=(1,-t_0)$. Thus the trajectory connecting two points of the line $x=1$ in the first and fourth quadrants (see Figure~\ref{fig: DiscreteTrajectory}) may be viewed either as a ray starting from $A_k$ in the first quadrant or from $A_0$ in the fourth.

For  technical reasons, rather than parameterizing the recurrence by $t_k=\tann{\theta}$, we work with $t_0=\tau_0$, restricted to the feasible range of Definition~\ref{def: inspection feasible}. The challenge is then to determine the point where the trajectory intersects the line $x=1$ again, and hence the corresponding deployment angle $\theta$. In Lemma~\ref{lem: recursion for discrete theta} the angular step is $\alpha=2(\pi-\theta)/k$, still vanishing as $k\to\infty$ but expressed in terms of $\theta$. To eliminate this dependence we place $n$ equispaced points $P_0,\ldots,P_n$ on the unit circle, with step $\alpha=2\pi/n$, and recover $\theta$ as a function of $\tau_0$.

This setup leads to the following continuum limit as $n\to\infty$, namely the ODE system $\syst(\tau_0)$ of Definition~\ref{def: psi tau functions ode and Tau}. The corresponding trajectory $\mathcal T$ begins at $(1,-\tau_0)$, remains outside the unit disk, and intersects the line $x=1$ in the first quadrant at some point $\mathcal T(\xi)$, where $\xi$ is the deployment parameter.
We now state a sequence of technical results establishing this limit.
We defer their proofs to Appendix~\ref{sec: proofs omitted from sec: optimal control problem}.

First, we show that if the discrete recursion is extended to continuous functions by connecting consecutive values linearly, the resulting piecewise-linear functions converge to the solution of the ODE system $\syst(\tau_0)$ of Definition~\ref{def: psi tau functions ode and Tau}. 

\begin{lemma}
\label{lem: continuum-limit}
Let $(y_i)_{i=0}^n$ and $(t_i)_{i=0}^n$ be defined by \eqref{eq: x_i}--\eqref{eq: t_i} with $\alpha=2\pi/n$, $y_0=\pi/2$, and $t_0=\tau_0$. Define piecewise-linear interpolants $\psi_n,\tau_n:[0,1]\to\reals$ by $\psi_n(i/n)=y_i$ and $\tau_n(i/n)=t_i$, extended linearly on each $[i/n,(i+1)/n]$. Then $(\psi_n,\tau_n)$ converges uniformly on compact subsets of $(0,1]$ to $(\psi,\tau)$, the unique solution to the ODE system $\syst(\tau_0)$ with $\psi(0)=\pi/2$ and $\tau(0)=\tau_0$.
\end{lemma}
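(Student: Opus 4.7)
The recursion \eqref{eq: x_i}--\eqref{eq: t_i} constitutes a forward-Euler discretization of $\syst(\tau_0)$ with step size $1/n$, and the plan is to run a discrete Gr\"onwall analysis sharpened to handle the singularity of the ODE vector field at $x=0$. Taylor-expanding \eqref{eq: y_i} with $\alpha = 2\pi/n$ gives
\[
y_i - y_{i-1} \;=\; \frac{1}{n}\!\left(-2\pi + \frac{\cot y_{i-1}}{(i+1)/n}\right) + O(1/n^2),
\]
and a parallel expansion of \eqref{eq: t_i} combined with \eqref{eq: x_i} yields $\tfrac{1}{n}\cdot 2\pi(t_{i-1}\cot y_{i-1}-1) + O(1/n^2)$. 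These are the $\psi$- and $\tau$-vector fields of $\syst(\tau_0)$ evaluated at $(x_{i-1}, y_{i-1}, t_{i-1})$, up to inessential index shifts, so the scheme is consistent with local truncation error $R_i = O(1/n^2)$, uniformly after using Lemma~\ref{lem: wellposed} to bound the second derivatives of $(\psi, \tau)$ on $[0, 1]$.

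Let $e_i := (y_i - \psi(x_i),\,t_i - \tau(x_i))$. The Jacobian of the vector field along the exact trajectory has operator norm at most $L(x) = \csc^2 \psi(x)/x + O(1)$; crucially the factor $1/x$ carries the coefficient $\csc^2 \psi(x)$, which tends to $\csc^2(\pi/2) = 1$ as $x \to 0$, so $\int_{x_j}^{x_i} L(x)\,dx = \log(x_i/x_j) + O(1)$ on $(0,1]$. Because the initial error vanishes, $e_0 = 0$, the discrete Gr\"onwall inequality specializes to
\[
\|e_i\|_\infty \;\le\; \sum_{j=1}^{i}\|R_j\|_\infty \prod_{k=j+1}^{i}\!\left(1+\tfrac{L(x_k)}{n}\right)
\;\lesssim\; n^{-2}\sum_{j=1}^{i}\frac{x_i}{x_j}
\;=\; O\!\left(\frac{x_i\log n}{n}\right).
\]
In particular $\|e_i\|_\infty = O(\log n / n) \to 0$ uniformly for $i$ with $x_i \in [\epsilon, 1]$. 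Since per-step increments of both the discrete scheme and the ODE solution are $O(1/n)$ and $(\psi, \tau)$ is Lipschitz on $[\epsilon, 1]$, piecewise-linear interpolation contributes only an additional $O(1/n)$ error, yielding uniform convergence of $(\psi_n, \tau_n)$ to $(\psi, \tau)$ on $[\epsilon, 1]$. Since $\epsilon > 0$ was arbitrary, this is the claimed convergence on every compact subset of $(0,1]$.

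The main obstacle is the singular Lipschitz constant $L(x) \sim 1/x$ at $x=0$, which invalidates black-box Euler convergence theorems on $[0,1]$. The resolution rests on two structural features of this particular system: the initial error is exactly zero, so there is no initial discrepancy to amplify exponentially, and the coefficient multiplying $1/x$ in the Lipschitz bound equals the value $\csc^2\psi(0)=1$ at the singularity, so the Gr\"onwall product $\prod(1+L(x_k)/n)$ telescopes into the clean factor $x_i/x_j$ rather than $(x_i/x_j)^c$ with $c>1$, keeping the aggregate error of order $\log n / n$ and vanishing in the limit.
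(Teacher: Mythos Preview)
Your route differs from the paper's. The paper establishes the consistency relation $(y_i-y_{i-1})/\Delta x=-2\pi+\cot(y_{i-1})/x_i+O(\Delta x)$ \emph{only on intervals $[\delta,1]$ bounded away from $0$}, extracts subsequential uniform limits there by Arzel\`a--Ascoli, and identifies the limit via ODE uniqueness, never confronting the singularity directly and never producing a rate. Your Gr\"onwall scheme would yield the explicit rate $O((\log n)/n)$, which is strictly more information, but the singular end requires care you have not supplied.

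Concretely, a careful expansion of \eqref{eq: y_i} gives
\[
y_i - y_{i-1} \;=\; -\alpha \;+\; \frac{\cot y_{i-1}}{i+1} \;+\; \frac{\alpha}{i+1} \;+\; O(1/n^2),
\]
so your displayed formula drops the term $\alpha/(i+1)=2\pi/(n(i+1))$, which for $i=O(1)$ is of order $1/n$, not $1/n^2$. The ``inessential index shift'' from $1/x_{i+1}$ to $1/x_{i-1}$ likewise costs $\tfrac{2}{n^2}\cot y_{i-1}/(x_{i-1}x_{i+1})$, again $O(1/(n^2x_i))$ near $0$. Bounding the second derivatives of $(\psi,\tau)$ via Lemma~\ref{lem: wellposed} controls only the ODE-side Taylor remainder $\psi(x_i)-\psi(x_{i-1})-\tfrac{1}{n}\psi'(x_{i-1})$ and says nothing about either of these scheme-side errors. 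If you feed $|R_j|=O(1/(n^2x_j))$ into your own Gr\"onwall sum you obtain $(x_i/n^2)\sum_{j\le i} x_j^{-2}=O(x_i)$, which does not vanish. The rescue---unstated in your sketch---is that the two dangerous contributions cancel: since $\cot\psi(x)=\pi x+O(x^3)$ near $0$, the index-shift term equals $2\pi/(n(i+1))+O(x_{i-1}/n^2)$, matching $\alpha/(i+1)$ exactly at leading order, after which $R_i=O(1/n^2)$ genuinely holds uniformly and your bound goes through. This cancellation is the real content near $x=0$; the paper avoids it entirely by working on $[\delta,1]$ throughout.
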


We are now ready to provide the limiting behavior of the inspecting curve, giving rise to $\mathcal T^{\tau_0}$ as in Definition~\ref{def: tau trajectory}. In the remainder of the section, for notational convenience, we drop the superscript and write simply $\mathcal T$. The next lemma
shows that the polygonal trajectories converge indeed to $\mathcal T$.

\begin{lemma}
\label{lem: trajectory-representation}
Let $\phi_i$ and $L_i(t)$ be defined as in \eqref{def:phii} and \eqref{equa: parametric tangent line}, and let $A_i=L_i(t_i)$, where $(t_i)$ is defined by \eqref{eq: t_i} with $t_0=\tau_0$.
Assume that $\tau_0$ is inspection-feasible (Definition~\ref{def: inspection feasible}), and let $\xi=\xi(\tau_0)$ be its deployment parameter, that is, the smallest $\xi\in(0,1]$ such that $\mathcal T_1(\xi)=1$ for $\mathcal T=\mathcal T^{\tau_0}$ of Definition~\ref{def: tau trajectory}.
Let $\widetilde{\mathcal T}_n$ be the polygonal path obtained by linear interpolation through the points $(A_i)_{i=0}^n$ with parameter values $h_i=i/n$.

Then $\widetilde{\mathcal T}_n$ converges to $\mathcal T$ on $[0,\xi)$, in the sense that for every $0<\delta\le \xi'<\xi$,
$$
\sup_{x\in[\delta,\xi']}\lVert \widetilde{\mathcal T}_n(x)-\mathcal T(x)\rVert \to 0
\qquad\text{as } n\to\infty.
$$
Moreover, $\mathcal T(0)=A_0$.
\end{lemma}

Since $\mathcal T$ arises as the continuum limit of the discrete trajectories, the endpoint $A_k$ of the polygonal path (see Figure~\ref{fig: DiscreteTrajectory}) corresponds to an index $k=\Theta(n)$. As $n\to\infty$ this endpoint is denoted $A_\infty$ in Figure~\ref{fig: optimal}. We can now justify the definition of the deployment parameter. 

\begin{lemma}
\label{lem: deployment parameter}
Let $\tau_0$ be inspection-feasible to the ODE system of Definition~\ref{def: psi tau functions ode and Tau}. Then its deployment parameter $\xi=\xi(\tau_0)$ satisfies $\mathcal T_2(\xi)=\tann{\theta}$, where $\theta=(1-\xi)\pi$.
\end{lemma}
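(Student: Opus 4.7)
\textbf{Proof proposal for Lemma~\ref{lem: deployment parameter}.} The plan is to reduce the claim to a short trigonometric identity by exploiting the parameterization used to define $\mathcal T^{\tau_0}$, together with the feasibility condition $\mathcal T_1(\xi) = 1$. First I would introduce the abbreviation $\theta := (1-\xi)\pi$, so that $2\pi\xi = 2\pi - 2\theta$ and hence
$$
\cos(2\pi\xi) = \cos(2\theta), \qquad \sin(2\pi\xi) = -\sin(2\theta).
$$
Plugging these identifications into Definition~\ref{def: tau trajectory}, the feasibility condition $\mathcal T_1(\xi) = 1$ rewrites as $\cos(2\theta) + \tau(\xi)\sin(2\theta) = 1$. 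Using $1 - \cos(2\theta) = 2\sin^2\theta$ and $\sin(2\theta) = 2\sin\theta\cos\theta$, I can then solve for $\tau(\xi)$ (for the range of $\theta$ where $\sin(2\theta)\neq 0$), obtaining the clean identity $\tau(\xi) = \tan\theta$.

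The second step is a direct substitution. From the same definition,
$$
\mathcal T_2(\xi) = -\sin(2\pi\xi) - \tau(\xi)\cos(2\pi\xi) = \sin(2\theta) - \tan\theta \cdot \cos(2\theta).
$$
Applying the double-angle identities $\sin(2\theta) = 2\sin\theta\cos\theta$ and $\cos(2\theta) = \cos^2\theta - \sin^2\theta$, the right-hand side simplifies to $\sin\theta(\cos^2\theta + \sin^2\theta)/\cos\theta = \tan\theta$, which is exactly the claim. As a byproduct, the argument shows that $\tau(\xi) = \mathcal T_2(\xi) = \tan\theta$, matching the geometric picture in Figure~\ref{fig: optimal} in which $A_\infty = (1,\tan\theta)$ terminates both the inspection phase and the deployment segment.

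There is no substantive analytic obstacle here: the lemma is really a consistency check between the algebraic form of $\mathcal T^{\tau_0}$ and the geometric meaning of the deployment angle. The only care required is bookkeeping near degenerate configurations. Specifically, the division by $\sin(2\theta)$ requires $\theta \notin \{0, \pi/2\}$, i.e. $\xi \notin \{1/2, 1\}$; for the regime of interest, $\xi > 1/2$ is guaranteed by Theorem~\ref{thm: main result}, while $\xi = 1$ (deployment angle $\theta = 0$) can be handled separately, since then $\mathcal T_1(1) = 1$ forces $\mathcal T(1) = (1,0)$ and $\tan\theta = 0$. Thus the argument covers all inspection-feasible $\tau_0$ and establishes the stated identification of the endpoint of the inspection phase with the deployment endpoint $A_\infty$.
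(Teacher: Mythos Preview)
Your proof is correct and follows essentially the same route as the paper's: both use the condition $\mathcal T_1(\xi)=1$ together with the substitution $\theta=(1-\xi)\pi$ and standard double-angle identities to extract $\mathcal T_2(\xi)=\tan\theta$. Your version is slightly more explicit (deriving the intermediate identity $\tau(\xi)=\tan\theta$) and more careful about the degenerate endpoints $\xi\in\{1/2,1\}$, but the underlying argument is the same trigonometric check.
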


From the progress above, and starting with inspection-feasible $\tau_0$, we compute $\theta$-inspective curve to $\theta$-\adi, where $\theta=\theta(\xi)$, and $\xi=\xi(\tau_0)$. The next lemma uses again the continuum construction to derive the cost of $\mathcal T$ to $\theta$-\adi, as a function of $\tau_0$, and in terms of the solution to ODE system $\syst(\tau_0)$.

\begin{lemma}
\label{lem: sol to cont partial}
Let $\tau_0$ be inspection-feasible with deployment parameter $\xi=\xi(\tau_0)$. Then $\mathcal T$ is a feasible solution to $\theta$-\adi, where $\theta=(1-\xi)\pi$, and the average cost equals
$$
\frac{2\pi}{\xi}\int_0^{\xi}\frac{x\tau(x)}{\sinn{\psi(x)}}\dd x.
$$
\end{lemma}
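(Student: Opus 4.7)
The plan is to realize $\mathcal T|_{[0,\xi]}$ as the continuum limit of the polygonal $(\theta,k)$-inspective trajectories of Section~\ref{sec: fermat gives recursion} and then to transfer the discrete average cost of Lemma~\ref{lem: cost to discrete} to a Riemann integral via Lemma~\ref{lem: continuum-limit}. Concretely, I would instantiate the recurrence of Lemma~\ref{lem: recursion for discrete theta} with angular step $\alpha=2\pi/n$, initial conditions $y_0=\pi/2$ and $t_0=\tau_0$, and truncate at the smallest index $k=k(n)$ at which the polygonal chain returns to the line $x=1$. By Lemmas~\ref{lem: trajectory-representation} and~\ref{lem: deployment parameter}, we then have $k/n\to\xi$ as $n\to\infty$ and the polygonal path $A_k A_{k-1}\cdots A_0$ converges to $\mathcal T|_{[0,\xi]}$.

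Feasibility of $\mathcal T$ for $\theta$-\adi\ with $\theta=(1-\xi)\pi$ reduces to three observations. First, by Definition~\ref{def: tau trajectory} and Lemma~\ref{lem: deployment parameter}, the endpoints of $\mathcal T|_{[0,\xi]}$ are $A_0=(1,-\tau_0)$ and $\mathcal T(\xi)=(1,\tann{\theta})$, the latter at distance $1/\coss{\theta}$ from the origin. Second, inspection-feasibility of $\tau_0$ forces $\|\mathcal T(x)\|>1$ throughout $[0,\xi]$, so the curve stays outside the unit disk. Third, the $(\theta,k)$-inspective property of the polygonal approximants transfers to the limit because each vertex $A_i$ lies on the tangent line $L_i$ at $P_i$, and the density of these tangent contacts on the arc $\phi\in[2\theta,2\pi]$ together with uniform convergence of the chain to $\mathcal T$ yields full inspection.

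For the cost identity, I would start from $C_{\theta,k}(t)=\frac{1}{k+1}\sum_{i=1}^k i\,d_i$ in Lemma~\ref{lem: cost to discrete}, substitute the closed form \eqref{eq: d_i} for $d_i$ together with $x_i=y_{i-1}-\alpha$ from \eqref{eq: x_i}, and use the expansions $\sin\alpha=2\pi/n+O(1/n^3)$ and $\tan(\alpha/2)=\pi/n+O(1/n^3)$ to obtain
$$i\,d_i=\frac{2\pi}{n}\cdot\frac{i\,\tau_n((i-1)/n)}{\sinn{\psi_n((i-1)/n)}}\bigl(1+O(1/n)\bigr).$$
Since $k+1=\xi n(1+o(1))$, the normalized sum $\frac{1}{k+1}\sum_{i=1}^k i\,d_i$ is precisely a Riemann sum converging to $\frac{2\pi}{\xi}\int_0^\xi \frac{x\,\tau(x)}{\sinn{\psi(x)}}\dd x$, which is the claimed formula. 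The main obstacle is that Lemma~\ref{lem: continuum-limit} provides uniform convergence of $(\psi_n,\tau_n)$ to $(\psi,\tau)$ only on compact subsets of $(0,1]$, so the neighborhood of $x=0$ needs separate care. I would handle this by exploiting that $\psi(0)=\pi/2$ and $\tau(0)=\tau_0<\infty$, so $\tau/\sinn{\psi}$ is bounded near $0$ and the integrand $x\tau(x)/\sinn{\psi(x)}$ vanishes at the origin; splitting the interval into $[0,\varepsilon]$ and $[\varepsilon,\xi]$, the discrete head contributes at most $O(\varepsilon^2)$ uniformly in $n$ while the integral's tail contributes at most $O(\varepsilon^2)$, so sending $n\to\infty$ followed by $\varepsilon\to 0$ completes the argument.
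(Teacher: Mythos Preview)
Your proposal is correct and follows essentially the same route as the paper: instantiate the recurrence with step $\alpha=2\pi/n$ and initial data $(y_0,t_0)=(\pi/2,\tau_0)$, truncate at the return to $x=1$ so that $k/n\to\xi$, and pass the discrete cost $\frac{1}{k+1}\sum_i i\,d_i$ to the Riemann integral via the expansion of $d_i$ from~\eqref{eq: d_i}. The only cosmetic difference is that the paper handles the $x=0$ endpoint by establishing a uniform lower bound $\sin\psi(x)\ge m>0$ on $[0,\xi]$ rather than your $\varepsilon$-splitting, but the two devices address the same concern.
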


We can now conclude the proof of Theorem~\ref{thm: main result}. By~\eqref{equa: inf wrt theta} and the discussion of Section~\ref{sec: partial average disk inspection}, the optimal inspective curve has cost $\inf_{\theta\in[0,\pi/2)} B_\theta(s_0)$, where $s_0=s_0(\theta)$ is the cost to some $\theta$-\adi\ instance. Since the optimal curve does not touch the disk, the deployment angle is $\theta=(1-\xi)\pi$ for some inspection-feasible $\tau_0$. Lemma~\ref{lem: sol to cont partial} gives $s_0$ as a function of $\xi$, which can then be substituted into Theorem~\ref{thm: old partial to full} to yield the expression of Theorem~\ref{thm: main result}.

Finally, since $\theta\in[0,\pi/2)$ we have $\xi>1/2$, ensuring that expression~\eqref{equa: function on xi and t0} is well defined. To conclude, it remains to show that the expression attains a minimum, which follows from continuity together with the Extreme Value Theorem.

\begin{lemma}
\label{lem: int attains min}
Fix an inspection-feasible initial value $\tau_0$ with deployment parameter $\xi\in(1/2,1]$. Define
$
I(\xi):=2\pi\int_0^{\xi}\tfrac{x\tau(x)}{\sinn{\psi(x)}}\dd x.
$
Then $I(\xi)$ is well defined and belongs to $C^1([0,\xi])$, with
$
I'(\xi)=2\pi\xi\tfrac{\tau(\xi)}{\sinn{\psi(\xi)}}.
$
\end{lemma}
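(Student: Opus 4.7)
The plan is to reduce the claim to a direct application of the Fundamental Theorem of Calculus; the bulk of the work lies in verifying that the integrand $f(x):=x\tau(x)/\sinn{\psi(x)}$ is continuous on the closed interval $[0,\xi]$, after which differentiability and the derivative formula are immediate.

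First, I would invoke Lemma~\ref{lem: wellposed} to assert that $(\psi,\tau)$ is the unique $C^1$ solution to $\syst(\tau_0)$ on $[0,1]$; in particular, both $\psi$ and $\tau$ are continuous on $[0,\xi]\subseteq[0,1]$. Because the right-hand side of the $\psi$-equation contains $\cot\psi(x)/x$, well-posedness on the whole interval $[0,1]$ forces $\sinn{\psi(x)}\neq 0$ throughout $(0,1]$ — otherwise the ODE would blow up — and the initial condition $\psi(0)=\pi/2$ gives $\sinn{\psi(0)}=1$. By continuity of $\sin\psi$ on the compact interval $[0,\xi]$, it is then bounded away from zero; a similar compactness argument gives $|\tau|$ uniformly bounded on $[0,\xi]$.

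With $\sinn{\psi}$ bounded away from zero and $x,\tau(x)$ continuous, the integrand $f(x)=x\tau(x)/\sinn{\psi(x)}$ is continuous on $[0,\xi]$ with $f(0)=0$, hence Riemann integrable on $[0,t]$ for every $t\in[0,\xi]$, and $I(t):=2\pi\int_0^t f(x)\dd x$ is well defined. Applying the Fundamental Theorem of Calculus to the continuous integrand $f$ yields $I'(t)=2\pi f(t)=2\pi\, t\,\tau(t)/\sinn{\psi(t)}$, and continuity of $I'$ then follows from the continuity of $f$, giving $I\in C^1([0,\xi])$.

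The main (and essentially only) obstacle is ruling out that $\sinn{\psi}$ vanishes on $[0,\xi]$, which would destroy both the integrability of $f$ and the derivative formula at that point. This is handled cleanly by Lemma~\ref{lem: wellposed}: since the ODE requires $\cot\psi$ to remain finite on $(0,1]$ for a solution to exist there, $\sinn{\psi}$ cannot vanish on $[0,1]$, and a fortiori not on $[0,\xi]$. Once this is in hand, the rest of the argument is a textbook application of the Fundamental Theorem of Calculus.
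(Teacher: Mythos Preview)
Your proposal is correct and follows essentially the same approach as the paper: establish continuity of $(\psi,\tau)$ from well-posedness, argue that $\sinn{\psi}$ stays bounded away from zero on $[0,\xi]$ (you via the $\cot\psi$ term in the ODE forcing $\sinn{\psi}\neq 0$, the paper via the assertion $\psi(x)\in(0,\pi)$), deduce continuity of the integrand $f$, and apply the Fundamental Theorem of Calculus.
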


\subsection{Bounds on the Optimal Deployment Angle}
\label{sec: Restriction of Deployment Angles}

In light of Theorem~\ref{thm: main result} proved in the previous section, the natural approach to finding the optimal solution to \adi\ is to solve the underlying single-parameter (here $\tau_0$) optimal control problem with objective~\eqref{equa: function on xi and t0}. Put differently, we are back to determining the optimal solution as
$\inf_{\theta \in [0,\pi/2)} B_\theta(s_0)$, where $s_0=s_0(\theta)$ is the optimal solution to $\theta$-\adi, see~\eqref{equa: inf wrt theta}. However, not all deployment angles $\theta$ give rise to optimal trajectories that avoid intersecting the disk, which is a premise of Theorem~\ref{thm: main result}. 

For this reason, we restrict the range $[0,\pi/2)$ of deployment angles. Small values of $\theta$ are excluded because the corresponding optimal trajectories touch the disk. Large values of $\theta$ are excluded for a different reason: as $\theta\to\pi/2$, the deployment point moves arbitrarily far from the origin, and we show in the next section that such angles cannot be optimal. Consequently, these angles need not be considered in the optimization.
We then search numerically for the minimizing deployment angle over the remaining range. Restricting to this range also avoids numerical instability in the associated \spocp, which would otherwise arise from the large scales involved. The numerical procedures used in this restricted regime are justified in Section~\ref{sec:numerics}.

To summarize, in this section we show a refinement of~\eqref{equa: inf wrt theta}, as follows.

\begin{lemma}
\label{lem: restriction of thetas}
The optimal solution to \adi\ is given by
$
\inf_{\theta \in [0.52, 1.148)} B_\theta(s_0).
$
\end{lemma}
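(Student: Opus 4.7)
The plan is to combine the known upper bound on the optimal value of \adi\ with lower bounds on $B_\theta(s_0(\theta))$ over the two complementary subintervals. From~\cite{conley2025multi} we have a certified upper bound $U \leq 3.5509015$, so any minimizing deployment angle $\theta^{\ast}$ must satisfy $B_{\theta^{\ast}}(s_0(\theta^{\ast})) \leq U$. It therefore suffices to prove $B_\theta(s_0(\theta)) > U$ for every $\theta \in [0, 0.52) \cup [1.148, \pi/2)$. Since $B_\theta(s)$ is strictly increasing in $s$, any explicit lower bound $s_0(\theta) \geq s_{\mathrm{low}}(\theta)$ immediately yields $B_\theta(s_0(\theta)) \geq B_\theta(s_{\mathrm{low}}(\theta))$, reducing the problem to evaluating the closed-form expression for $B_\theta$ on an explicit lower bound for $s_0$.

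For the upper interval $[1.148, \pi/2)$, the argument exploits the rapid growth of the deployment-only contributions $\tfrac{1}{2\pi}\log\tfrac{1+\sin\theta}{1-\sin\theta}$ and $\tfrac{1-\theta/\pi}{\cos\theta}$, both of which diverge as $\theta \to \pi/2$. Combined with any uniform geometric lower bound $s_0(\theta) \geq s_{\min}$---for example one arising from the fact that any $\theta$-inspective curve must reach every tangent line $L_\phi$ in the uninspected arc and, in particular, touch the far tangent $x=-1$ at $P_\pi$---a direct evaluation at $\theta = 1.148$ should certify $B_{1.148}(s_{\min}) > U$. A monotonicity check of $\theta \mapsto B_\theta(s_{\min})$ on this interval then extends the inequality across $[1.148, \pi/2)$.

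For the lower interval $[0, 0.52)$ the approach is analogous but requires a more substantial lower bound on $s_0(\theta)$, since the prefactor $1-\theta/\pi$ is near $1$ and the deployment contribution is modest. An explicit bound in this regime can be sourced either from the discrete NLP relaxation of~\cite{conley2025multi} (which remains rigorous even when the optimal $\theta$-inspective curve touches the disk, so that Theorem~\ref{thm: main result} does not directly apply), or from the monotonicity $s_0(\theta) \geq s_0(0.52)$ combined with a \spocp\ evaluation of $s_0(0.52)$ in the non-touching regime. Substituting into $B_\theta$ and verifying at $\theta = 0.52$ should then exceed $U$, and monotonicity in $\theta$ closes the interval.

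The hard part is precisely the small-$\theta$ regime, because Theorem~\ref{thm: main result} cannot be invoked when the optimal trajectory touches the disk; obtaining a valid lower bound on $s_0(\theta)$ there must rely on methods independent of $\syst(\tau_0)$, such as the discrete NLP relaxation or ad hoc geometric inequalities exploiting the disk-obstruction constraint. Once such lower bounds are in place, the remainder reduces to a controlled numerical evaluation at the two endpoints $0.52$ and $1.148$ with explicit error bounds, as treated in Section~\ref{sec:numerics}.
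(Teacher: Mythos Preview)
Your high-level decomposition is exactly the paper's: use the upper bound $U=3.5509015$ from~\cite{conley2025multi}, and for each of the two excluded subintervals produce a lower bound on $s_0(\theta)$ strong enough that $B_\theta(s_0(\theta))>U$. The paper packages the two halves as Lemmata~\ref{lem: theta no big} and~\ref{lem: theta no small}. However, two of your concrete suggestions would not close the argument.

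\emph{Upper interval $[1.148,\pi/2)$.} A constant lower bound $s_{\min}$ coming from ``the curve must touch $x=-1$'' is too weak. A direct evaluation gives $B_{1.148}(s)=\tfrac{1}{2\pi}\log\tfrac{1+\sin 1.148}{1-\sin 1.148}+(1-\tfrac{1.148}{\pi})(\sec 1.148+s)\approx 2.034+0.635\,s$, so one needs $s\gtrsim 2.39$ to exceed $U$. Reaching the line $x=-1$ only certifies that \emph{some} inspection times are at least $2$, which does not average up to $2.39$ once the cheaply inspected part of the arc is included. The paper instead partitions $[2\theta,2\pi]$ into three subarcs and, crucially, lower bounds the inspection time on the last subarc by Isbell's worst-case trajectory value $\tan\theta+(\pi-2\theta)+1$; the resulting bound is $\theta$-dependent (its $\tan\theta$ term is what drives it above the threshold) and yields $h(\theta):=B_\theta(\text{bound})$ with $h(1.148)\approx 3.5535>U$ and $h'(\theta)>0$ on the interval.

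\emph{Lower interval $[0,0.52)$.} Your first option, the discrete NLP relaxation, is precisely what the paper uses (Lemma~\ref{lem: cont at least some discrete} followed by Lemma~\ref{lem: theta no small}); note the paper's additional observation that the relaxed program is \emph{convex} (a nonnegative combination of norms of affine maps over a box), which is what permits certifying its solution as a genuine global lower bound for each sampled $\theta$. Your second option, however, has two problems: the monotonicity $s_0(\theta)\ge s_0(0.52)$ is asserted but not established anywhere, and evaluating $s_0(0.52)$ via \spocp\ presupposes that the optimal $0.52$-inspective curve avoids the disk, which is part of what Lemma~\ref{lem: restriction of thetas} is being used to prove. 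So only the NLP route is viable here, as in the paper.
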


We start by showing that the deployment angle cannot be too large. 

\begin{lemma}
\label{lem: theta no big}
$\theta_0 \leq 1.148$ for the optimal deployment angle $\theta_0$ to \adi. 
\end{lemma}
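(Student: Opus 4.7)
The plan is to rule out large deployment angles by showing that any $\theta\in[1.148,\pi/2)$ forces $B_\theta(s_0(\theta))>U$, where $U\le 3.5509$ is the upper bound on the optimal \adi\ cost from the previous NLP-based work in~\cite{conley2025multi}. Since Theorem~\ref{thm: old partial to full} expresses $B_\theta(s)=\tfrac{1}{2\pi}\log\tfrac{1+\sinn{\theta}}{1-\sinn{\theta}}+(1-\theta/\pi)(1/\coss{\theta}+s)$ as an affine, strictly increasing function of $s$, it suffices to produce a computable lower bound $L(\theta)\le s_0(\theta)$ satisfying $B_\theta(L(\theta))>U$ throughout $[1.148,\pi/2)$.

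For the lower bound I would start with the elementary observation that any point that inspects $P_\phi$ belongs to the closed outer half-plane $H_\phi=\{A:A\cdot(\coss{\phi},\sinn{\phi})\ge 1\}$; consequently $I_{P_\phi}$ is at least the Euclidean distance from $S=(1,\tann{\theta})$ to $H_\phi$, explicitly $(\coss{\theta}-\coss{\phi-\theta})/\coss{\theta}$. Uniformly averaging over $\phi\in[2\theta,2\pi]$ yields the closed-form baseline $s_0(\theta)\ge 1+\tann{\theta}/(\pi-\theta)$. At $\theta=1.148$ this baseline alone falls just short of the threshold required to clear $U$, so I would strengthen $L(\theta)$ by incorporating (i) a rate-of-inspection bound asserting that at any point of distance $r$ from the disk centre a unit-speed inspector can grow its already-inspected angular set at a rate controlled by $r$, which via a Fubini-type integration yields an independent lower bound of order $\pi-\theta$; and (ii) for $\phi$ on the far side of the disk, the tangent-plus-arc geodesic distance from $S$ through one of its tangency points $P_0,P_{2\theta}$ to $L_\phi$ whenever this dominates the perpendicular estimate. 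Taking the pointwise maximum of these estimates and integrating over $\phi$ produces a closed-form $L(\theta)$.

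Having assembled $L(\theta)$, I would verify $B_\theta(L(\theta))>U$ at $\theta=1.148$ by certified numerical evaluation in the spirit of Section~\ref{sec:numerics}, and then extend the inequality to $\theta\in(1.148,\pi/2)$ by monotonicity: the logarithmic and deployment summands of $B_\theta(\cdot)$ together with $(1-\theta/\pi)L(\theta)$ are monotonically increasing on a right-neighbourhood of $1.148$, so $B_\theta(L(\theta))$ dominates $B_{1.148}(L(1.148))>U$ uniformly throughout the remaining interval. Combined with $B^\ast\le U$, this forces $\theta_0\le 1.148$.

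The main obstacle is the sharpness of $L(\theta)$: the simple perpendicular bound misses $U$ at $\theta=1.148$ by only a small margin, so recovering the gap requires carefully combining the Euclidean, inspection-rate, and tangent-arc geodesic estimates, and tracking every inequality with enough numerical precision to produce a verifiable margin at the boundary. This combination is what pins down the explicit threshold value $1.148$ in the statement, and sensitivity of the bound there is the principal technical challenge.
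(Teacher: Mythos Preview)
Your high-level strategy matches the paper exactly: lower-bound $s_0(\theta)$ by an explicit $L(\theta)$, substitute into $B_\theta$, compare against the known upper bound $U\approx 3.5509$ from~\cite{conley2025multi}, and dispose of all $\theta>1.148$ by monotonicity. Where you diverge is in how $L(\theta)$ is assembled.

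The paper does something considerably simpler than your continuous-integration scheme. It partitions $[2\theta,2\pi]$ into three subarcs $I_1=[2\theta,\pi]$, $I_2=[\pi,3\pi/2]$, $I_3=[3\pi/2,2\pi]$ and lower-bounds $I_{P_\phi}$ by a \emph{constant} on each piece: $0$ on $I_1$; $2$ on $I_2$ (the time to first see $P_\pi$); and $\tann{\theta}+(\pi-2\theta)+1$ on $I_3$ (the tangent-plus-arc-plus-unit cost to first see $P_{3\pi/2}$, invoking the worst-case optimal trajectory of~\cite{conley2025multi}). Averaging gives the closed form $s_0(\theta)\ge \pi(\tann{\theta}+\pi-2\theta+3)/(4(\pi-\theta))$, which substituted into $B_\theta$ yields an explicit $h(\theta)$; the paper then differentiates $h$ symbolically, checks $h'>0$ on $[1.148,\pi/2)$, and evaluates $h(1.148)\approx 3.5535>U$.

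So your strengthening (ii), the tangent--arc geodesic, is precisely the device the paper uses on $I_3$, except applied at the single anchor $P_{3\pi/2}$ rather than integrated pointwise over $\phi$. Your strengthening (i), the ``rate-of-inspection'' bound, does not appear in the paper and is the least concrete part of your plan; you would have to spell out exactly which rate inequality you mean and verify it contributes enough. In fact the paper's computation suggests (i) is unnecessary: the crude constant-per-interval scheme already clears $U$ with a margin of about $3\times 10^{-3}$ at $\theta=1.148$.

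One further point: you claim monotonicity ``on a right-neighbourhood of $1.148$'', but the conclusion needs it on all of $[1.148,\pi/2)$. The paper secures this globally via the explicit derivative $h'(\theta)=\tfrac14(\tan^2\theta-1)+\tfrac{(\pi-\theta)\tan\theta\sec\theta}{\pi}$; with your less explicit $L(\theta)$ you would need an analogous global argument, not just a local one.
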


\ignore{
test[x_] := (1/(2  Pi))  Log[(1 + Sin[x])/(1 - Sin[x])] + (1 - 
     x/Pi)  (1/Cos[x] + (
     3 \[Pi] + \[Pi]^2 - 2  \[Pi]  x + \[Pi]  Tan[x])/(4  (\[Pi] - x))
     )

FindRoot[ test[x] == 3.5509015, {x, 1}]
     
FullSimplify [ 
 D[ test[x], x], Assumptions -> 0 <= x < Pi/2
 ]     
}

\begin{proof}
Set $\mathcal O = \inf_{\theta \in [0,\pi/2)} B_\theta(s)$. 
The main contribution to~\cite{conley2025multi} is an upper bound of $3.5509015$ to the optimal cost to \adi, that is $\mathcal O \leq 3.5509015$. Next we show that $B_\theta(s) > \mathcal O$ when $\theta >1.148$. 

First, for any fixed $\theta$, we provide a lower bound to the cost of $\theta$-\adi\ in which we are inspecting points $P_\phi$ with $\phi \in [2\theta,2\pi]$. We do this by lower bounding the inspection cost of points $P_\phi$ in the  intervals
$$
I_1 = [2\theta,\pi], \quad
I_2 = [\pi,3\pi/2], \quad
I_3 = [3\pi/2,2\pi].
$$
The inspection cost for points in $I_1$ is $0$. For points in $I_2$, the cost is at least the time required to inspect $P_\pi$, namely $2$. For points in $I_3$, the cost is at least the time required to inspect $P_{3\pi/2}$. In this case we employ the provably optimal trajectory from~\cite{conley2025multi}, established for the worst-case problem. Since all preceding points must already have been inspected by moving counterclockwise around the disk, $P_{3\pi/2}$ cannot be inspected earlier than 
$
\tann{\theta}+(\pi-2\theta)+1,
$
where $\tann{\theta}$ is the tangent length to the disk, $\pi-2\theta$ is the circular arc length, and the additional $1$ is the final straight segment needed to complete visibility without intersecting the disk. Overall, this shows that
$$
s(\theta) \geq 
\tfrac{3\pi/2-\pi}{2\pi-2\theta}\cdot 2
+
\tfrac{2\pi-3\pi/2}{2\pi-2\theta}\cdot \left(\tann{\theta}+(\pi-2\theta)+1\right)
=
\frac{\pi  (\tan (\theta )+\pi -2 \theta + 3)}{4 (\pi -\theta )}.
$$
By Theorem~\ref{thm: old partial to full}, $B_{\theta}(s)$ is increasing in $s$, and therefore
$$
B_{\theta}(s) \geq 
 \frac{1}{2\pi}
\log \!\left( \frac{1+\sinn{\theta}}{1-\sinn{\theta}} \right)
+
\left(1 - \frac{\theta}{\pi} \right) 
\left( 
\frac{1}{\coss{\theta}} 
+\frac{\pi  (\tan (\theta )+\pi -2 \theta + 3)}{4 (\pi -\theta )}
\right).
$$
Call this last expression $h(\theta)$. A direct calculation gives
$$
h'(\theta)=\tfrac{1}{4}(\tan^2\theta-1)+\tfrac{(\pi-\theta)\tan\theta\sec\theta}{\pi}.
$$
One verifies that $h'(\theta)>0$ for $\theta\in[1.148,\pi/2)$, hence $h(\theta)$ is increasing on this interval. Therefore
$$
B_{\theta}(s)\ge h(\theta)\ge h(1.148)\approx 3.55348>\mathcal O,
$$
where $h(1.148)$ is evaluated with numerical precision to at least ten digits, and the reported difference to $\mathcal O$ exceeds the fourth decimal place. Details on the robustness of this computation are deferred to Section~\ref{sec:numerics}.
\end{proof}

\begin{lemma}
\label{lem: cont at least some discrete}
Fix $\theta \in [0,\pi/2)$ and $k\geq 5$. 
Then, the optimal solution to $\theta$-\adi\ is at least the minimum of
$
\sum_{i=1}^{k}  \tfrac{i-1}k  \|A_{i-1}A_{i}\|
$
over all points $A_i$ that form $(\theta,k)$-inspective curves. 
\end{lemma}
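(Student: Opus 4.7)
The plan is to exhibit, from an optimal $\theta$-inspective curve $T$ achieving cost $s_0(\theta)$, a specific polygonal chain $A_kA_{k-1}\cdots A_0$ on $T$ that is itself $(\theta,k)$-inspective and satisfies $\sum_{i=1}^{k}\tfrac{i-1}{k}\|A_{i-1}A_i\| \leq s_0(\theta)$; this produces an admissible chain witnessing the claimed lower bound. I would parameterize $T$ by arc length from its endpoint at $(1,\tann{\theta})$, let $\tau(P)$ denote the first time $T$ inspects a perimeter point $P$, set $\tau_i := \tau(P_i)$, and take $A_i := T(\tau_i)$.

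The structural backbone of the argument is monotonicity: any optimal $T$ may be assumed to inspect the perimeter arc $[2\theta,2\pi]$ in order of increasing $\phi$, so that $\tau(P_\phi)$ is non-decreasing in $\phi$. This gives both $0=\tau_k\leq \tau_{k-1}\leq \cdots \leq \tau_0$ and, for each $\phi\in[\phi_i,\phi_{i-1}]$, the pointwise bound $\tau(P_\phi)\geq \tau_i$. Establishing this rigorously is the main obstacle of the proof: the intuition is that the optimal curve spirals counterclockwise around the disk without backtracking, but justifying this requires an exchange argument at the level of the inspected set $S(t)\subseteq[2\theta,2\pi]$, showing that $S(t)$ may be taken to be a single arc anchored at $\phi=2\theta$ that expands monotonically with $t$. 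Any local deviation---either a detour that inspects a far-away $P_\phi$ before a closer one, or a reversal that re-inspects an already covered region---can be spliced out or reordered without increasing the average cost.

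Granted monotonicity, the calculation is short. Writing $\alpha=(2\pi-2\theta)/k$, the pointwise bound yields
$$
s_0(\theta) = \frac{1}{2\pi-2\theta}\int_{2\theta}^{2\pi}\tau(P_\phi)\,\dd\phi \;\geq\; \frac{1}{2\pi-2\theta}\sum_{i=1}^{k}\int_{\phi_i}^{\phi_{i-1}}\tau_i\,\dd\phi \;=\; \frac{1}{k}\sum_{i=1}^{k}\tau_i.
$$
Since the arc of $T$ from $A_j$ to $A_{j-1}$ has length $\tau_{j-1}-\tau_j$, the triangle inequality gives $\tau_i = \sum_{j=i+1}^{k}(\tau_{j-1}-\tau_j) \geq \sum_{j=i+1}^{k}\|A_{j-1}-A_j\|$, using $\tau_k=0$. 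Swapping the order of summation in $\tfrac{1}{k}\sum_i \tau_i$ rewrites this as $\sum_{j=1}^{k}\tfrac{j-1}{k}\|A_{j-1}-A_j\|$, exactly the weighted chain length in the claim.

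To close the argument, I would verify that the chain $A_k A_{k-1}\cdots A_0$ is $(\theta,k)$-inspective. Since the optimal $T$ stays outside the open unit disk, the only way for $T$ to first acquire line-of-sight on $P_i$ is to cross the tangent line $L_i$, so $A_i\in L_i$; the counterclockwise direction of the spiral then forces the tangential parameter $t_i\geq 0$ for every $i$. The chain therefore lies in the geometric setup used in Lemma~\ref{lem: cost to discrete}, where any such polygonal chain is already known to be $(\theta,k)$-inspective, and the endpoint normalization $A_k=(1,\tann{\theta})$ (i.e., $t_k=\tann{\theta}$) holds by construction because $\tau_k=0$.
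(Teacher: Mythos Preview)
Your argument is correct and mirrors the paper's: define $A_i$ as the first-inspection point of $P_i$ on the optimal $\theta$-inspective curve, bound $I(P_\phi)$ on each subarc $[\phi_i,\phi_{i-1}]$ below by the partial chain length $\sum_{j>i}\|A_jA_{j-1}\|$, and then swap the order of summation. You make explicit the monotonicity of inspection times and the verification that the resulting chain is $(\theta,k)$-inspective, both of which the paper leaves implicit (it simply writes ``by the triangle inequality'').
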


\begin{proof}
Consider an optimal trajectory for the partial problem $\theta$-\adi\ which inspects all points $\{P_{\phi}\}_{\phi \in [2\theta,2\pi]}$, each with inspection time $I(P_\phi)$. It follows that such a trajectory is also feasible to $(\theta,k)$-\adi\ problem that inspects $k$ equidistant points in the same arc. 
Note that any point $P_{\phi}$ in interval $\phi \in [\phi_i,\phi_{i-1}]$ is inspected, by the triangle inequality, in time at least 
$
\sum_{j=i+1}^k \|A_{j}A_{j-1}\|.
$
It follows that for the cost to the continuous problem we have 
\begin{align*}
\frac{1}{2\pi-2\theta} 
\int_{2\theta}^{2\pi} I(P_\phi) \, d\phi
&= 
\frac{1}{2\pi-2\theta} 
\sum_{i=1}^{k} \int_{\phi_{i}}^{\phi_{i-1}} I( P_\phi) \, d\phi \\
&\geq  
\frac{1}{2\pi-2\theta} 
\sum_{i=1}^{k} \int_{\phi_{i}}^{\phi_{i-1}} \sum_{j=i+1}^k \|A_{j} A_{j-1}\|  \, d\phi \\
&= 
\frac{1}{2\pi-2\theta} 
\sum_{i=1}^{k-1} \left( \phi_{i-1} - \phi_{i} \right) \sum_{j=i+1}^k \|A_{j}A_{j-1}\|   \\
&= 
\frac{1}{2\pi-2\theta} 
\sum_{i=1}^{k} \left( \phi_{k-i} - \phi_{k-1} \right) \|A_{i-1}A_{i}\|   \\
&= 
\sum_{i=1}^{k}  \tfrac{i-1}k  \|A_{i-1}A_{i}\|.
\end{align*}
Here we used the explicit formula for the equidistant angles $\phi_i$ from~\eqref{def:phii}, which gives $\phi_{k-i}-\phi_{k-1} = \tfrac{2(\pi-\theta)}{k}(i-1)$.
\end{proof}

We are ready to show that the deployment angle cannot be too small. 
\begin{lemma}
\label{lem: theta no small}
$\theta_0 \geq 0.52$ for the optimal deployment angle $\theta_0$ to \adi. 
\end{lemma}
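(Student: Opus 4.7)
The plan is to mirror the strategy of the proof of Lemma~\ref{lem: theta no big}. Set $\mathcal O := \inf_{\theta \in [0,\pi/2)} B_\theta(s)$, so that by~\cite{conley2025multi} we have $\mathcal O \le 3.5509015$. Since $B_\theta(s)$ is monotonically increasing in $s$ (Theorem~\ref{thm: old partial to full}), it suffices to exhibit a lower bound $L(\theta) \le s(\theta)$ satisfying $B_\theta(L(\theta)) > \mathcal O$ throughout $\theta \in [0, 0.52]$.

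I would obtain such an $L(\theta)$ by invoking Lemma~\ref{lem: cont at least some discrete} at a moderate integer $k \ge 5$, which gives
\[
s(\theta) \ge L_k(\theta) := \min \sum_{i=1}^{k} \tfrac{i-1}{k}\|A_{i-1}A_i\|,
\]
the minimum taken over $(\theta,k)$-inspective polygonal chains $A_0A_1\cdots A_k$, or equivalently (for a valid lower bound) over the relaxed feasible set $A_i\in L_i$ (dropping the inspection constraint can only decrease the minimum). Because the segment $A_0A_1$ carries zero weight, the remaining optimization is a Fermat/Snell problem in direct analogy with Section~\ref{sec: fermat gives recursion}, now with media speeds $s_i = 1/(i-1)$ for $i\ge 2$ and $s_1=\infty$. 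Snell's Law at the interface $L_1$, with speed ratio $s_1/s_2=\infty$, forces the refraction angle to zero, so the boundary condition becomes $y_1=\pi/2$ (segment $A_1A_2$ perpendicular to $L_1$). Snell's Law at each remaining interface $L_i$, $i=2,\ldots,k-1$, with ratio $s_i/s_{i+1}=i/(i-1)$, yields the recursion
\[
y_i = \arccos\!\Big(\tfrac{i-1}{i}\cos(y_{i-1}-\alpha)\Big), \qquad \alpha = \tfrac{2(\pi-\theta)}{k},
\]
together with the sine-law updates for $t_i$ and $d_i$ exactly as in Lemma~\ref{lem: recursion for discrete theta}. Evaluating the weighted sum along this recursion yields an explicit expression for $L_k(\theta)$ computable in $O(k)$ arithmetic operations.

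The concluding step is a numerical verification: for a suitable choice of $k$, check that $B_\theta(L_k(\theta)) > 3.5509015$ throughout $\theta \in [0,0.52]$ via a fine grid evaluation together with an estimate of $\tfrac{d}{d\theta}B_\theta(L_k(\theta))$ bounding the variation between grid points, with numerical robustness handled in Section~\ref{sec:numerics}. The main obstacle I expect is calibrating $k$: the gap $\mathcal O - B_\theta(L_k(\theta))$ will be smallest near the endpoint $\theta=0.52$, and $L_k(\theta)$ must be tight enough there to certify a strict inequality. Since $L_k(\theta)\to s(\theta)$ as $k\to\infty$ and the recursion has linear complexity, moderate values of $k$ comparable to those used for the upper bounds in~\cite{conley2025multi} should suffice.
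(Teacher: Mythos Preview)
Your approach is correct and shares the high-level strategy with the paper (invoke Lemma~\ref{lem: cont at least some discrete}, then verify numerically that $B_\theta(\cdot)$ exceeds the known upper bound throughout $[0,0.52]$), but the paper computes the discrete lower bound differently. Rather than re-deriving a Snell recursion with shifted weights, the paper simply observes that the objective $\sum_{i=1}^k\tfrac{i-1}{k}\|A_{i-1}A_i\|$ is a nonnegative combination of norms of affine maps in $(t_{i-1},t_i)$ and hence convex on $\{t_i\ge 0\}$; it then solves this convex NLP directly with \texttt{Ipopt} for $k=1000$ on a $\theta$-grid, letting convexity certify global optimality of each solver output. Your route reuses the paper's central Fermat idea and avoids a black-box optimizer, but two points are glossed over. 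First, the boundary data sit at opposite ends ($y_1=\pi/2$ while $t_k=\tan\theta$), so evaluating $L_k(\theta)$ is a two-point boundary problem requiring the affine dependence of $t_k$ on $t_1$ to be inverted, not a pure forward sweep; this is still $O(k)$ but deserves a sentence. Second, the sine-law updates for $t_i,d_i$ inherited from Lemma~\ref{lem: recursion for discrete theta} presuppose $t_{i-1}\ge\tan(\alpha/2)$, which need not hold across $[0,0.52]$ (indeed the whole point of the lemma is that small $\theta$ is the suspect regime), so you would have to verify this along the computed trajectory or fall back on the algebraic first-order conditions directly. The paper's convexity-plus-solver route sidesteps both issues at the cost of relying on a numerical optimizer.
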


\begin{proof}
We show that deployment angles $\theta_0 < 0.52$ result in solutions to \adi\ of cost strictly more than $3.5509015$, which is a known upper bound. 

For this, we invoke Theorem~\ref{thm: old partial to full}, which expresses the cost $B_\theta(s)$ to \adi\ given deployment angle $\theta$, where $s=s(\theta)$ is the optimal cost to $\theta$-\adi. 
At the same time, Lemma~\ref{lem: cont at least some discrete} provides a lower bound to $s(\theta)$ via $(\theta,k)$-inspective curves identified by points $A_i=L_i(t_i)$, where $t_i\geq 0$, $i=0,\ldots,k$. 

In other words, a lower bound to $s(\theta)$, and subsequently to $B_\theta(s)$, for each $\theta$, can be obtained by solving the nonlinear program
$$
\min\;\sum_{i=1}^{k}  \tfrac{i-1}{k}\, \|A_{i-1}A_{i}\|
\quad \text{subject to } t_i\geq 0,\; i=0,\ldots,k,
$$
where $A_i=L_i(t_i)$ and $\|A_{i-1}A_i\|=\|g_i(t_{i-1},t_i)\|$ for an affine map $g_i$ in the pair $(t_{i-1},t_i)$.

The feasible region $\{t\in\mathbb{R}^{k+1}: t_i\ge 0\}$ is convex. For each $i$, the map $(t_{i-1},t_i)\mapsto \|g_i(t_{i-1},t_i)\|$ is convex, because it is the composition of a convex norm with an affine transformation. Multiplying by the nonnegative weight $(i-1)/k$ preserves convexity, and summing over $i$ preserves convexity. Hence the program is convex, and any local minimum is globally optimal. This justifies that the value we compute is the true discrete lower bound for each fixed $\theta$ and $k$. Details of the numerical implementation and accuracy guarantees are deferred to Section~\ref{sec:numerics}.

\begin{figure}[h!]
    \centering
    \begin{subfigure}{0.32\textwidth}
        \includegraphics[width=\linewidth]{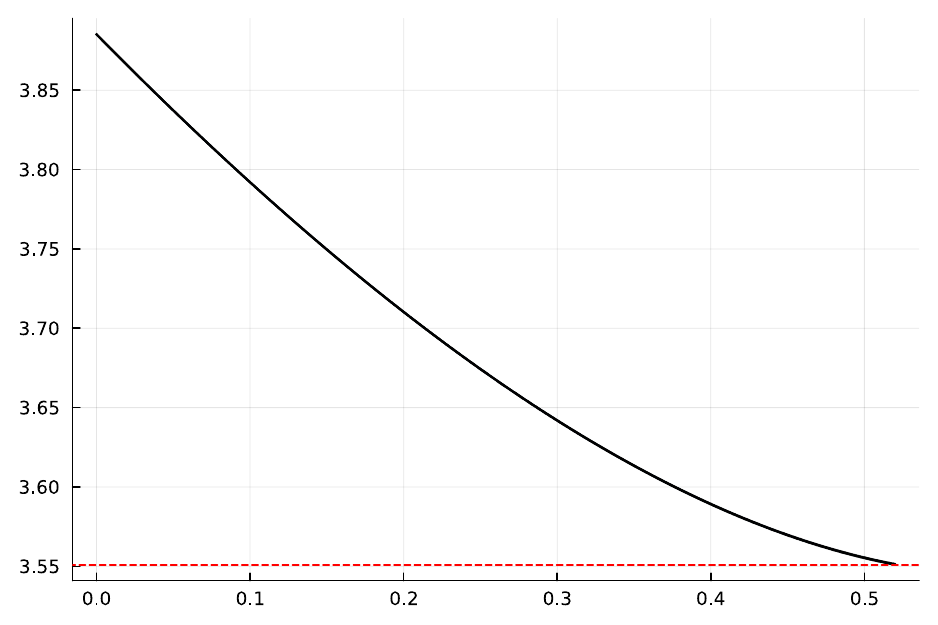}
        \caption{$\theta \in [0 , 0.52]$}
        \label{fig:0to052}
    \end{subfigure}
    \hfill
    \begin{subfigure}{0.32\textwidth}
        \includegraphics[width=\linewidth]{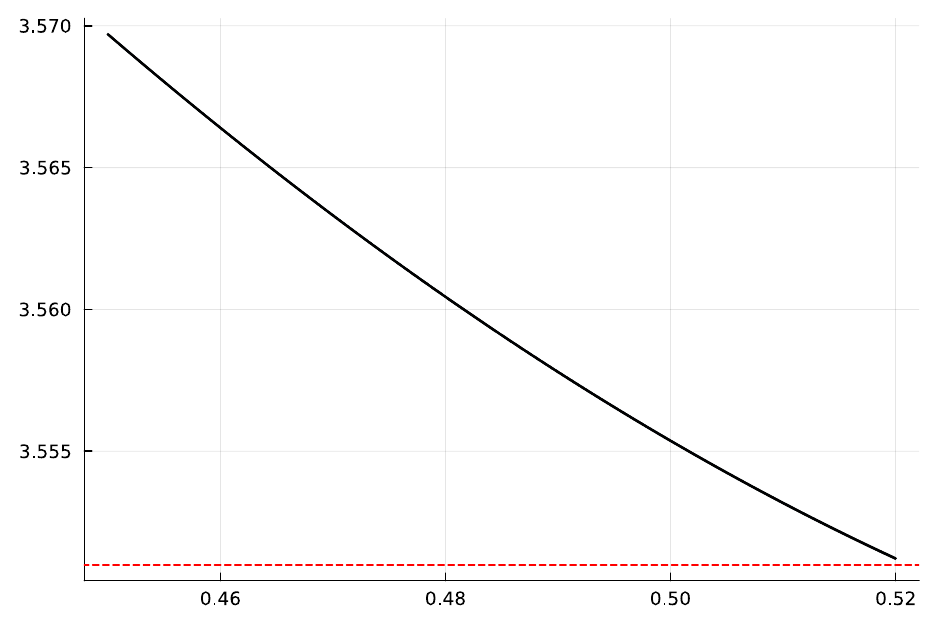}
        \caption{$\theta \in [0.45 , 0.52]$}
        \label{fig:045to052}
    \end{subfigure}
    \hfill
    \begin{subfigure}{0.32\textwidth}
        \includegraphics[width=\linewidth]{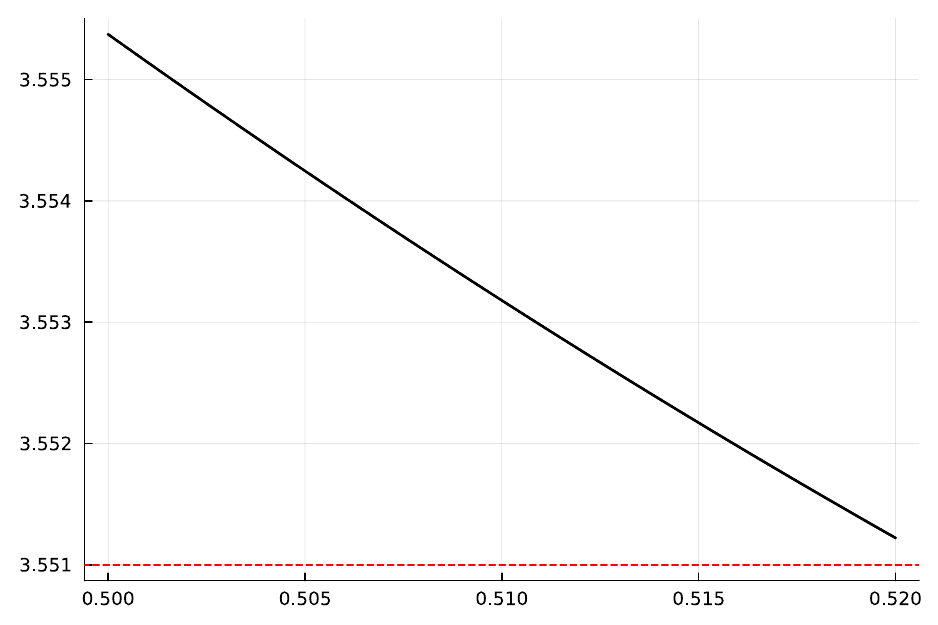}
        \caption{$\theta \in [0.5 , 0.52]$}
        \label{fig:05to052}
    \end{subfigure}
    \caption{The horizontal axis corresponds to $\theta$ and the vertical axis to the numerically calculated lower bound to $B_\theta(s)$. Each plot was computed by a grid of 1000 points in the range of $\theta$'s. The red lines correspond to the line $y=3.551$.}
    \label{fig: num lower bound NLP}
\end{figure}

Figure~\ref{fig: num lower bound NLP} illustrates the numerical values obtained. We performed computations with $k=1000$, evaluating the lower bound on a uniform grid of $1000$ points in the interval $[0,0.52]$. The minimum value observed was $3.5512215\dots$, attained at $\theta=0.52$, and the values form a strictly decreasing sequence throughout this interval. Empirically the sampled values remain above $3.551$ with a margin of about $3.2\times 10^{-4}$ at $\theta=0.52$, and refinements near $[0.5,0.52]$ (Figure~\ref{fig:05to052}) confirm this margin. Therefore, for all $\theta\in[0,0.52]$, the cost exceeds the known upper bound.
\end{proof}

Note that Lemma~\ref{lem: restriction of thetas} is directly implied by Lemmata~\ref{lem: theta no big} and~\ref{lem: theta no small}.

\subsection{The Optimal Solution to \adi\ -- Proof of Theorem~\ref{thm: main optimal result}}
\label{sec: proof of technical lemma}

In this section, we show that the premise of Theorem~\ref{thm: main result} is satisfied, namely that the optimal inspective curve does not touch the unit disk. Consequently, this allows us to optimize expression~\eqref{equa: function on xi and t0} and thus conclude with the proof of Theorem~\ref{thm: main optimal result}. We rely on the theoretical foundations established previously, and, as is necessary, we make use of further numerical calculations implemented in the \texttt{Julia} programming language~\cite{bezanson2017julia}. Many of the arguments below are based on numerical comparisons. In Section~\ref{sec:numerics} we provide the justifications that these computations are robust and consistent with the accuracy promised in the main theorem.

The next lemma analyzes inspection trajectories in a carefully chosen regime of initial values $\tau_0$ to the ODE system $\syst(\tau_0)$. 

\begin{lemma}
\label{lem: tau ok and theta ok}
Every $\tau_0 \in [1.64697 , 1.6525]$ is inspection feasible, and the corresponding deployment parameters $\xi=\xi(\tau_0)$ determine deployment angles $\theta = (1-\xi)\pi$ whose range covers interval $[0.52, 1.148]$.
\end{lemma}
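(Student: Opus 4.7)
\medskip

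\noindent\textbf{Proof proposal.} The plan is to combine continuous dependence of ODE solutions on initial data with certified numerical evaluation at the two endpoints of $[1.64697,1.6525]$, and then apply the Intermediate Value Theorem. Let $(\psi(\cdot;\tau_0),\tau(\cdot;\tau_0))$ denote the solution of $\syst(\tau_0)$ from Definition~\ref{def: psi tau functions ode and Tau}; by Lemma~\ref{lem: wellposed} this solution exists and is unique on $[0,1]$, and by standard ODE theory it depends continuously on $\tau_0$ in $C([0,1])$. Since $\mathcal T^{\tau_0}$ is a continuous function of $(\psi,\tau)$, the map $\tau_0\mapsto\mathcal T^{\tau_0}$ is continuous in the uniform topology on $[0,1]$.

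First, I would verify the non-touching condition at the two endpoints $\tau_0=1.64697$ and $\tau_0=1.6525$ by integrating $\syst(\tau_0)$ numerically, computing $\|\mathcal T^{\tau_0}(x)\|$ on a fine grid of $[0,1]$, and certifying by interval arithmetic (with the same precision guarantees invoked in Section~\ref{sec:numerics}) that a strict bound $\|\mathcal T^{\tau_0}(x)\|\ge 1+\delta$ holds for some $\delta>0$. The same computation also certifies that $\mathcal T_1^{\tau_0}(x)=1$ is met for some smallest $\xi(\tau_0)\in(0,1]$, with transversality $\tfrac{d}{dx}\mathcal T_1^{\tau_0}|_{x=\xi(\tau_0)}\ne 0$, so by the Implicit Function Theorem $\tau_0\mapsto\xi(\tau_0)$ is continuous near each endpoint. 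I would then propagate the bounds through the compact interval $[1.64697,1.6525]$: by compactness and uniform continuity of the ODE flow in $\tau_0$, it suffices to check the strict inequalities on a sufficiently fine $\tau_0$-grid so that the modulus of continuity leaves both $\|\mathcal T^{\tau_0}(x)\|>1$ and transversality intact in between. This certifies inspection-feasibility of every $\tau_0$ in the interval and continuity of $\tau_0\mapsto\xi(\tau_0)$ throughout.

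Next, I would verify numerically that $\xi(1.64697)\approx 0.6346$ and $\xi(1.6525)\approx 0.8345$, equivalently $\theta(1.64697)=(1-\xi(1.64697))\pi\gtrsim 1.148$ and $\theta(1.6525)=(1-\xi(1.6525))\pi\lesssim 0.52$. Since $\tau_0\mapsto\theta(\tau_0)=(1-\xi(\tau_0))\pi$ is continuous on $[1.64697,1.6525]$ by the previous step, the Intermediate Value Theorem gives that $\{\theta(\tau_0):\tau_0\in[1.64697,1.6525]\}$ contains the interval $[0.52,1.148]$, which is exactly the second claim.

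The main obstacle is the uniform certification that $\|\mathcal T^{\tau_0}(x)\|>1$ on the entire two-parameter rectangle $(\tau_0,x)\in[1.64697,1.6525]\times[0,\xi(\tau_0)]$, since this is the condition that both validates the application of Theorem~\ref{thm: main result} and guarantees the ODE trajectory genuinely corresponds to an inspective curve. I would address this with certified numerics (interval arithmetic or rigorous enclosures for $\syst(\tau_0)$), relying on the robustness discussion of Section~\ref{sec:numerics}; the other steps—continuity of $\xi(\tau_0)$, transversality at the crossing, and endpoint evaluations—are comparatively routine.
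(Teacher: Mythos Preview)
Your approach is essentially the paper's: a numerical sweep over $\tau_0$ certifying both the non-touching condition and the crossing $\mathcal T_1^{\tau_0}(\xi)=1$, followed by the observation that the endpoint $\theta$-values bracket $[0.52,1.148]$ so continuity (IVT) gives coverage. The paper is less ceremonious than you are (it simply evaluates on a $2000$-point $\tau_0$-grid and reports $\min_x\tau(x)\ge 0.2$, without invoking the Implicit Function Theorem or interval arithmetic), but the logical skeleton matches.

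Two points worth flagging. First, you miss the simplification the paper exploits: from Definition~\ref{def: tau trajectory} one checks directly that $\|\mathcal T^{\tau_0}(x)\|^2=1+\tau(x)^2$, so the non-touching condition $\|\mathcal T^{\tau_0}(x)\|>1$ is \emph{equivalent} to $\tau(x)\ne 0$. Since $\tau(0)=\tau_0>0$, it suffices to certify $\min_{x\in[0,\xi]}\tau(x)>0$, which is a one-dimensional scalar check rather than a two-component norm computation; the paper verifies $\min_x\tau(x)\ge 0.2$, giving radial clearance $\sqrt{1.04}-1\approx 0.0198$. Second, your endpoint values are swapped: the paper reports $\theta(1.64697)\approx 0.501$ and $\theta(1.6525)\approx 1.160$, so $\theta$ is \emph{increasing} in $\tau_0$ on this interval, not decreasing as your $\xi(1.64697)\approx 0.6346$, $\xi(1.6525)\approx 0.8345$ would imply. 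This does not damage the IVT argument, but your stated numbers are wrong and would be contradicted by the actual integration.
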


\begin{proof}
We solve the ODE system $\mathcal T^{\tau_0}$ of Definition~\ref{def: psi tau functions ode and Tau} on a grid of 2000 sample points for initial conditions $\tau_0 \in [1.64697 , 1.6525]$. The resulting values are summarized in Figure~\ref{fig: par to ODE system sol}. Recall that the initial condition $\tau(0)=\tau_0$ determines $\xi$, $\theta$ and the entire curve $\mathcal T(\cdot)$.

Figure~\ref{fig:xifortau} reports $\xi=\xi(\tau_0)$, the parameter with $\mathcal T(0)=A_0$ and $\mathcal T(\xi)=A_\infty$. In Figure~\ref{fig:mintfortau}, we compute the minimum of $\tau(x)$ over $x\in [0,\xi]$, which is bounded below by $0.2$. Since $\|\mathcal T(x)\|=\sqrt{1+\tau(x)^2}$, the distance of $\mathcal T(x)$ from the unit circle is $\sqrt{1+\tau(x)^2}-1$. Therefore a uniform bound $\tau(x)\ge 0.2$ implies a radial clearance of at least
$$
\sqrt{1+0.2^2}-1 \approx 0.01980198\ldots .
$$
This proves inspection feasibility for every $\tau_0$ in the stated range. Accuracy of the ODE integration and stability checks are deferred to Section~\ref{sec:numerics}.

Finally, Figure~\ref{fig:thetafortau} shows the corresponding deployment angles $\theta = (1-\xi)\pi$. The endpoints satisfy
$$
\theta(1.64697)\approx 0.501177\ldots \quad\text{and}\quad \theta(1.6525)\approx 1.1600947\ldots,
$$
and the image of $[1.64697,1.6525]$ under $\theta(\cdot)$ contains $[0.52,1.148]$, as indicated by the horizontal reference lines in the figure.
\end{proof}

\begin{figure}[h!]
    \centering
    \begin{subfigure}{0.32\textwidth}
        \includegraphics[width=\linewidth]{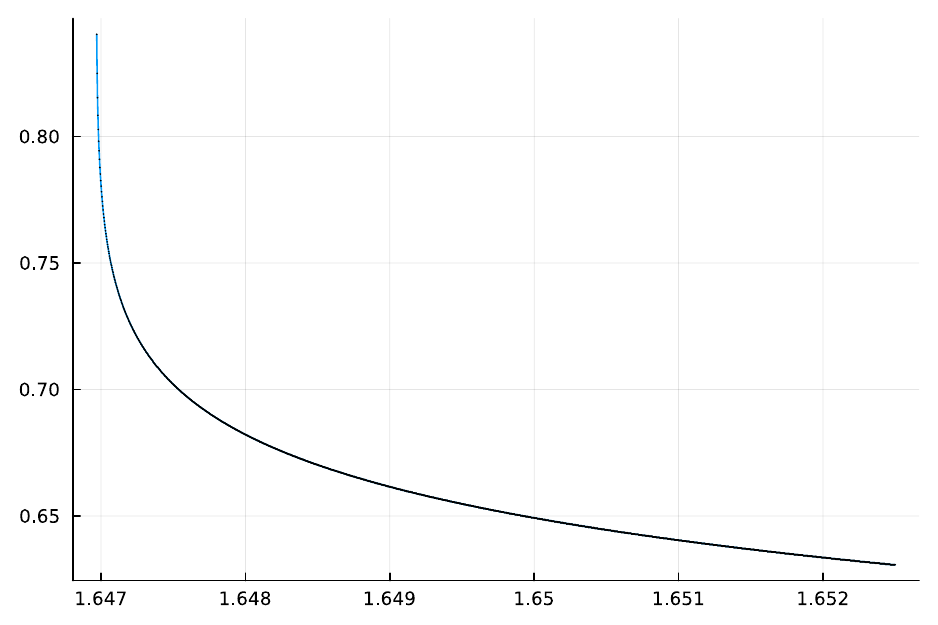}
        \caption{Plot of the deployment parameter $\xi=\xi(\tau_0)$ against $\tau_0$.}
        \label{fig:xifortau}
    \end{subfigure}
    \hfill
    \begin{subfigure}{0.32\textwidth}
        \includegraphics[width=\linewidth]{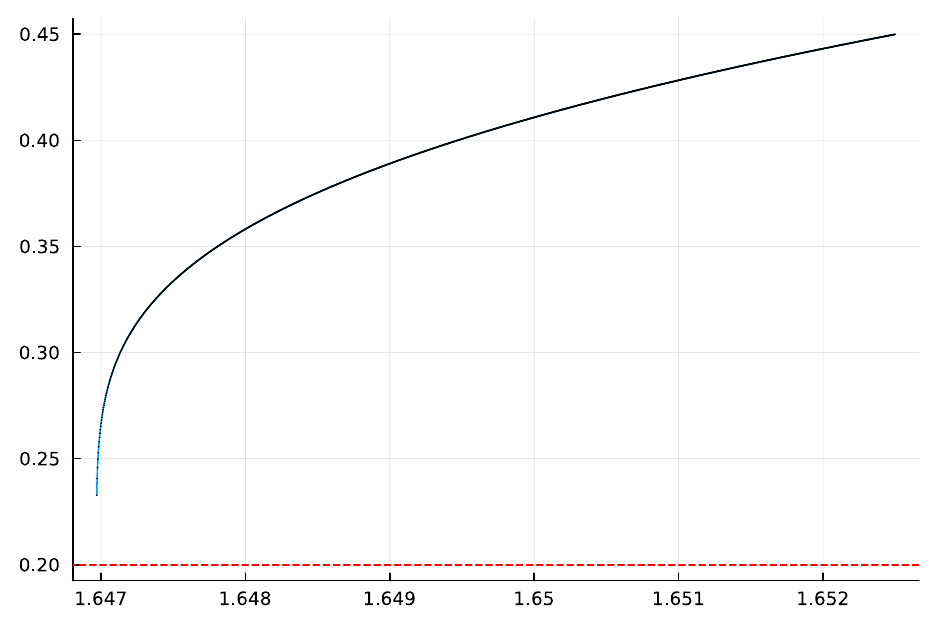}
        \caption{Plot of the minimum value of $\tau(x)$, where $x\in [0,\xi]$, against $\tau_0$. 
 }
        \label{fig:mintfortau}
    \end{subfigure}
    \hfill
    \begin{subfigure}{0.32\textwidth}
        \includegraphics[width=\linewidth]{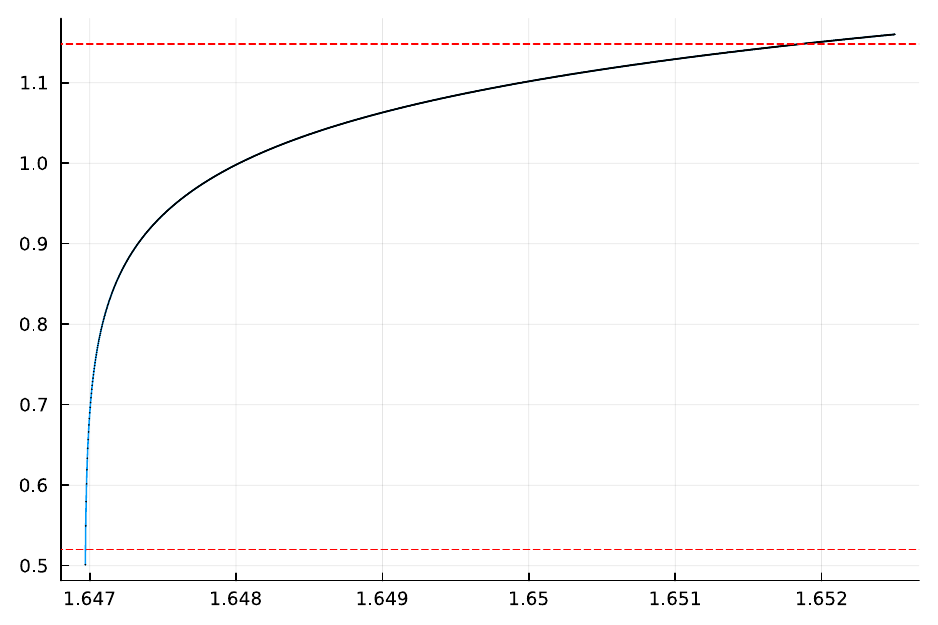}
        \caption{Plot of the deployment angle $\theta=\theta(\tau_0)$ against $\tau_0$.}
        \label{fig:thetafortau}
    \end{subfigure}
    \caption{
Plots of parameters of trajectory $\mathcal T$ as obtained by the solution to the ODE system of Definition~\ref{def: psi tau functions ode and Tau} for initial conditions $\tau_0 \in [1.64697 , 1.6525]$.
    }
    \label{fig: par to ODE system sol}
\end{figure}

We are now ready to conclude with the proof of Theorem~\ref{thm: main optimal result}.

\begin{proof}[Proof of Theorem~\ref{thm: main optimal result}]
By Lemma~\ref{lem: restriction of thetas}, the optimal cost to \adi\ is
$
\inf_{\theta \in [0.52, 1.148]} B_\theta(s(\theta)).
$
By Lemma~\ref{lem: tau ok and theta ok}, all $\tau_0 \in [1.64697 , 1.6525]$ are inspection feasible, and the corresponding deployment angles cover $[0.52, 1.148]$. Thus the inspective curves do not touch the disk, and Theorem~\ref{thm: main result} applies.

Therefore, in order to determine the optimal solution it remains to minimize the cost expression of Theorem~\ref{thm: main result} over the admissible trajectories. The ODE formulation shows that each trajectory is uniquely determined by the initial condition $\tau(0)=\tau_0$, so that the cost becomes a function of this single parameter. Hence the problem reduces to $\spocp(\tau_0)$, namely the problem of minimizing~\eqref{equa: function on xi and t0} over $\tau_0 \in [1.64697 , 1.6525]$.

Since the admissible family is one-dimensional, we evaluate this function numerically and refine the search interval around the minimizer. Figure~\ref{fig: min sol to cont problem} summarizes this coarse-to-fine evaluation over increasingly refined intervals of 2000 grid points each. This refinement brackets the minimizer and provides the required numerical precision.

The minimum is sandwiched between $3.5492598$ and $3.54925986$. For
$\tau_0 = 1.6469768608776936$ (exact value) we obtain
$\xi=0.8119098734258519\ldots$,
$\theta = 0.5909025598581181\ldots$, and a corresponding inspective curve with minimum distance at least $0.0302318\ldots$ to the disk boundary (corresponding to $\tau(\chi)\approx 0.24774522\ldots$ for some $\chi$), and reported cost approximately
$3.5492595860809693\ldots$.
Numerical accuracy of this computation is discussed in Section~\ref{sec:numerics}.
\end{proof}



\begin{figure}[h!]
    \centering
    \begin{subfigure}{0.32\textwidth}
        \includegraphics[width=\linewidth]{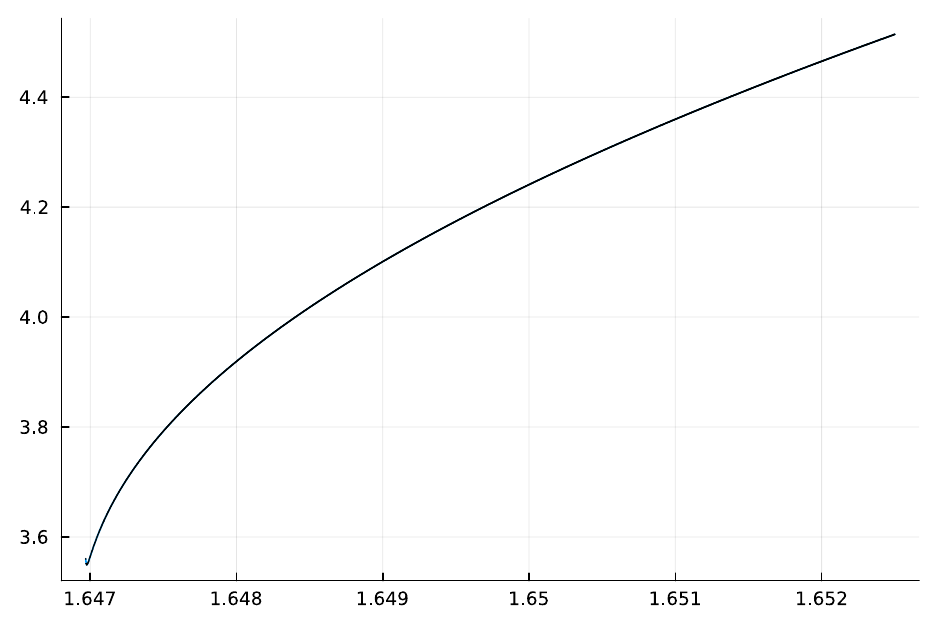}
        \caption{$\tau_0 \in [1.64697 , 1.6525]$}
        \label{fig:alldomain}
    \end{subfigure}
    \hfill
    \begin{subfigure}{0.32\textwidth}
        \includegraphics[width=\linewidth]{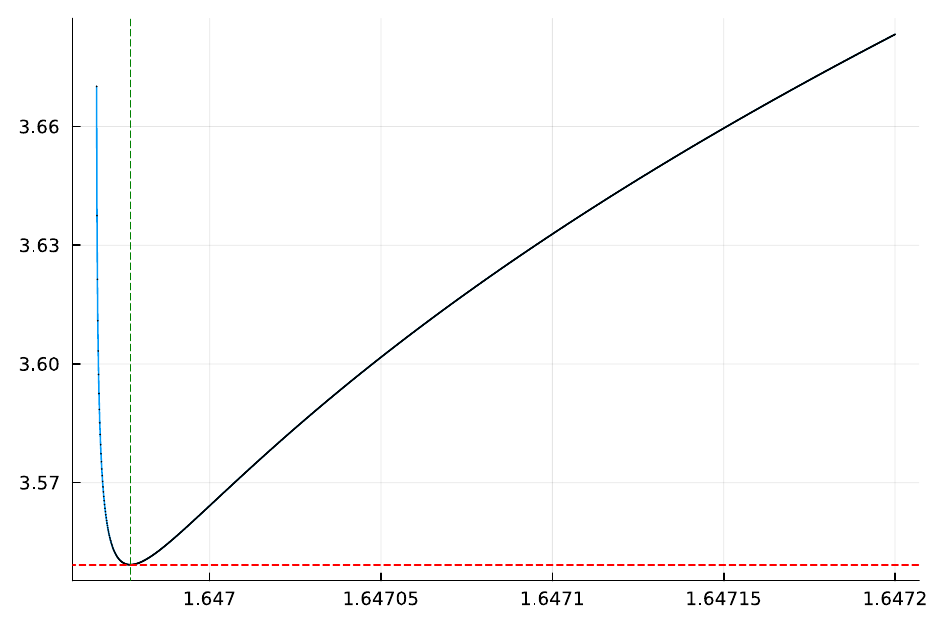}
        \caption{$\tau_0 \in [1.64697 , 1.6472]$}
        \label{fig:closeup}
    \end{subfigure}
    \hfill
    \begin{subfigure}{0.32\textwidth}
        \includegraphics[width=\linewidth]{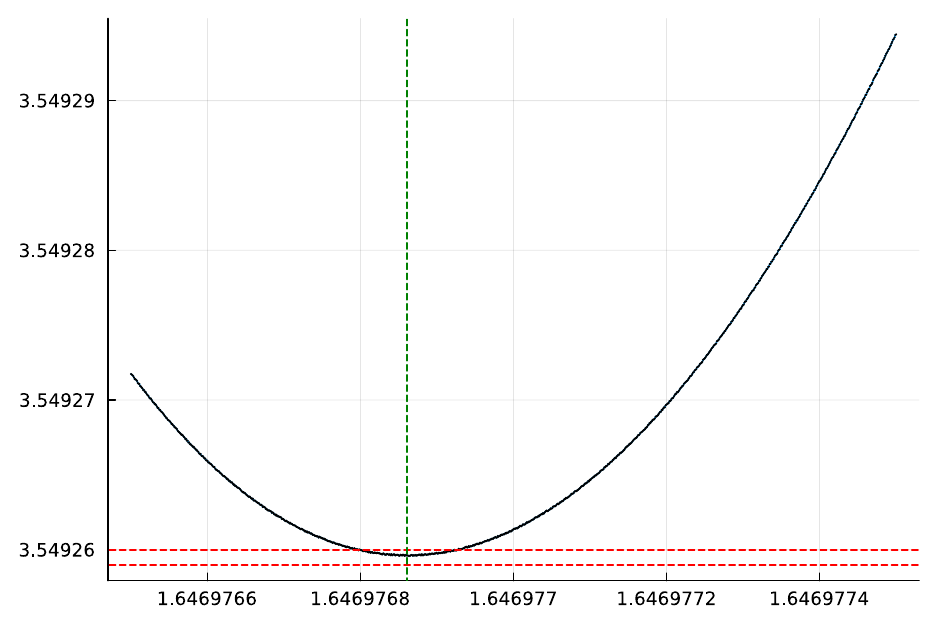}
        \caption{$\tau_0 \in [1.6469764 , 1.6469774]$}
        \label{fig:closerup}
    \end{subfigure}
    \caption{
    Plot of the cost to \adi\ as given by Theorem~\ref{thm: main result} against various initial conditions $\tau(0)=\tau_0$ shown on the horizontal axis. 
    The vertical green dotted line shows $\tau_0=1.64697686$.
    In Figure~\ref{fig:closeup}, the dotted red horizontal line corresponds to value 
$3.5492595$.
    In Figure~\ref{fig:closerup}, the dotted red horizontal lines correspond to values $3.549259$, $3.549260$.     
    }
    \label{fig: min sol to cont problem}
\end{figure}

\ignore{
in powershell run to compress file
& 'C:\Program Files\gs\gs10.05.1\bin\gswin64c.exe' -q -sDEVICE=pdfwrite -sOutputFile='.\\contsolCLOSERRUP_compressed.pdf' '.\\contsolCLOSERRUP.pdf' -dNOPAUSE -dBATCH
}

\ignore{
\begin{figure}[h!]
    \centering
    \begin{subfigure}{0.32\textwidth}
        \includegraphics[width=\linewidth]{figs/thetafortau.pdf}
        \caption{$\tau_0 \in [1.64697 , 1.6525]$}
        \label{fig:alldomain}
    \end{subfigure}
    \hfill
    \begin{subfigure}{0.32\textwidth}
        \includegraphics[width=\linewidth]{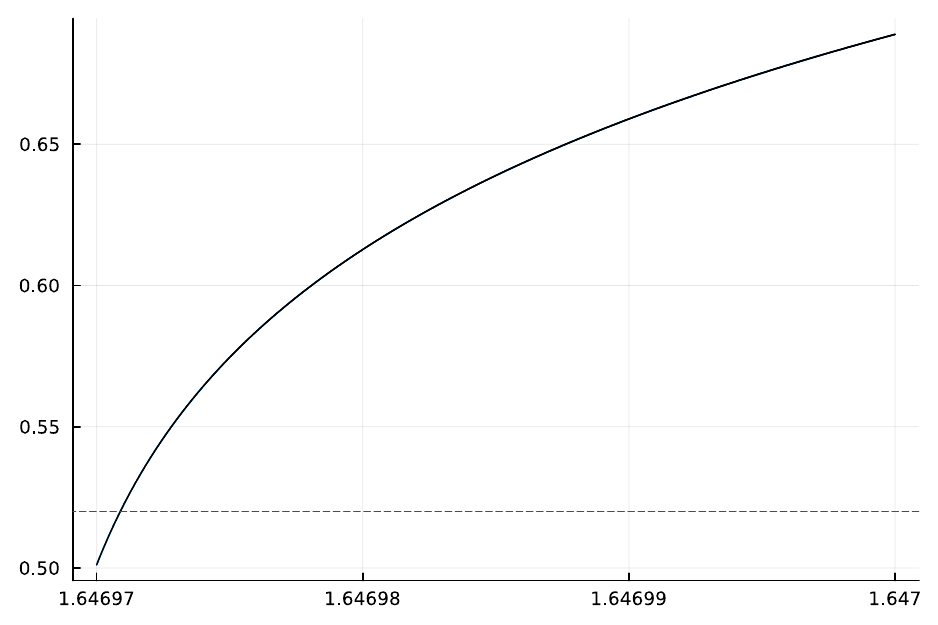}
        \caption{$\tau_0 \in [1.64697 , 1.647]$}
        \label{fig:closeup}
    \end{subfigure}
    \caption{
    Plot of the deployment angle $\theta=\theta(\tau_0)$ against $\tau_0$.
    The red dotted lines correspond to the lines $\theta=0.52$ and $\theta = 1.148$, showing that the range of $\theta(\tau_0)$ is a superset of $[0.52, 1.148]$, when $\tau_0 \in [1.64697 , 1.6525]$. 
    }
    \label{fig: min sol to cont problem}
\end{figure}
}

\subsection{Numerical Methods and Accuracy Guarantees}
\label{sec:numerics}

We used the \texttt{Julia} programming language~\cite{bezanson2017julia} for all computations.

For the convex nonlinear program that lower bounds $s(\theta)$ in Lemma~\ref{lem: theta no small}, we model it with \texttt{JuMP}~\cite{dunning2017jump} and solve it using the interior point method \texttt{Ipopt}~\cite{Ipopt,wachter2006implementation}. The program is convex since the feasible set $\{t_i \geq 0\}$ is convex and the objective is a nonnegative conical combination of norms of affine maps. Hence any KKT point is globally optimal. In practice, the solver returns primal and dual feasible solutions with residuals below $10^{-14}$ and objective values stable to at least $10^{-9}$. Because convexity guarantees global optimality, these certificates validate the solutions and support the digits reported in our lower bounds.

For the ODE system of Definition~\ref{def: psi tau functions ode and Tau}, used in Section~\ref{sec: proof of technical lemma} to establish feasibility of trajectories and to evaluate the cost functional, we proceed as follows. The equation for $\psi$ has a singularity at $x=0$, so we begin the integration at $x_0=10^{-6}$ using the asymptotic expansion $\psi(x_0)=\pi/2-\pi x_0+(\pi^2/2)x_0^2$, whose truncation error is $O(x_0^3)\approx 10^{-18}$ and therefore negligible compared with the solver tolerances. We integrate $\psi$ on $[x_0,1]$ with the adaptive stiff solver \texttt{Rodas5()} from the \texttt{DifferentialEquations.jl} library~\cite{rackauckas2017differential}, setting absolute and relative tolerances to $10^{-12}$, and then integrate $\tau$ on the same interval using the dense output of $\psi$ in the right-hand side. 

The deployment parameter $\xi$ is obtained in two stages. A uniform grid of $10{,}000$ points provides a coarse $\xi_{\text{approx}}$ as the last feasible point according to a geometric predicate that tolerates floating-point noise at the level $10^{-12}$. This value is refined by bisection to absolute tolerance $10^{-8}$. To further confirm stability, we repeat the bisection at tolerance $5\cdot 10^{-10}$ and recompute the objective, reporting the discrepancy between the two runs, which is consistently negligible. In addition, the minimum of $\tau(\cdot)$ on $[x_0,\xi_{\text{approx}}]$ is computed by Brent’s method~\cite{brent2013algorithms} with tolerance $10^{-10}$ to certify that all trajectories remain uniformly outside the disk. 

The integral of Lemma~\ref{lem: sol to cont partial} 
is evaluated using the adaptive Gauss-Kronrod quadrature routine \texttt{quadgk} from \texttt{QuadGK.jl}~\cite{quadgk}, 
with relative tolerance $10^{-12}$ and absolute tolerance $10^{-14}$. Substituting $(\xi,\theta)$ into the expression of Theorem~\ref{thm: main result} then yields the values reported in Section~\ref{sec: proof of technical lemma}.

Each source of numerical error is explicitly controlled. The asymptotic initialization at $x_0=10^{-6}$ contributes error below $10^{-18}$. The ODE solver controls local error to $10^{-12}$. 
The quadrature routine bounds the relative error 
for computing integral of Lemma~\ref{lem: sol to cont partial} to $10^{-12}$. 
The binary search tolerance $10^{-8}$ induces uncertainty on $\theta$ of at most $\pi\cdot 10^{-8}$, with an even tighter self-check available. Finally, the Brent minimization shows that $\tau(x)\geq 0.2$ throughout, implying $\|\mathcal T(x)\|\geq 1.0198$, so the curves remain at least $10^{-2}$ outside the disk. These guarantees confirm that the cost values reported in Theorem~\ref{thm: main optimal result} are reliable to at least six decimal digits.

\section{Discussion}

Bellman introduced the famous lost-in-a-forest problem and proposed several variants almost seventy years ago~\cite{bellman1956minimization}. 
In this work we resolve one of these variants, the Average-Case Disk-Inspection problem. The line of inquiry began with the heuristics of 
Gluss~\cite{gluss1961alternative} in the 1960s and continued through the discretization framework developed recently in~\cite{conley2025multiagentdiskinspectionArxiv,conley2025multi}. 
Our analysis not only establishes the exact optimum with certified numerical accuracy, but also reveals the structural nature of optimal trajectories. 
We show that they arise from a reformulation of the problem as an optics model based on Fermat’s Principle of Least Time, which leads to a single-parameter ODE system. 
Crucially, the resulting optimal trajectories avoid the unit disk, contrary to the conjecture of Gluss.

Beyond closing this specific problem, the methods introduced here suggest a possible direction for approaching other geometric search questions. The reformulation of a many-variable nonconvex program  into a single-parameter optimal control problem shows how optics-inspired principles can reduce  complexity and expose structure that is otherwise hidden. While our techniques were developed for  the disk-inspection setting, the interplay between discrete recursions and continuum limits may find  use in related problems where discretization has been the standard tool but has remained difficult  to analyze or scale.

\section*{Acknowledgements}
We thank Derek Muller and his team at \textit{Veritasium} for an expository discussion of Fermat’s principle of least time, in particular the episode \textit{``The Closest We’ve Come to a Theory of Everything''}, without which the perspective adopted in this work would likely not have emerged.
Special thanks also to Caleb Jones for early discussions of the project.

\bibliographystyle{plain}
\bibliography{BiblioSearch-new}

\appendix

\section{Proofs Omitted from Section~\ref{sec: fermat}}
\label{sec: Proofs Omitted from Section sec: fermat}

\begin{proof}[Proof of Lemma~\ref{lem: snell optimal}]
Without loss of generality let $\ell$ be the $x$-axis, with $M_1=\{(x,y):y\ge0\}$ and $M_2=\{(x,y):y\le0\}$. Take $A_1=(a_1,b_1)$ with $b_1>0$ in $M_1$ and $A_2=(a_2,-b_2)$ with $b_2>0$ in $M_2$. For $L=(x,0)\in\ell$ the travel time is
\[
T(x)=\frac{\|A_1-L\|}{s_1}+\frac{\|L-A_2\|}{s_2}
= \frac{\sqrt{(x-a_1)^2+b_1^2}}{s_1}+\frac{\sqrt{(x-a_2)^2+b_2^2}}{s_2}.
\]

For any fixed $L$, the shortest path from $A_1$ to $L$ in $M_1$ is the straight segment $A_1L$. 
Similarly, the shortest path from $L$ to $A_2$ in $M_2$ is $LA_2$. Replacing any detours by these straight segments never increases travel time. Thus there is an optimal path of the form $A_1LA_2$ with a single crossing of $\ell$. The problem reduces to minimizing $T(x)$.

Differentiating $T(x)$ gives
\[
T'(x)=\frac{x-a_1}{s_1\sqrt{(x-a_1)^2+b_1^2}}+\frac{x-a_2}{s_2\sqrt{(x-a_2)^2+b_2^2}},
\]
\[
T''(x)=\frac{b_1^2}{s_1\big((x-a_1)^2+b_1^2\big)^{3/2}}+\frac{b_2^2}{s_2\big((x-a_2)^2+b_2^2\big)^{3/2}}>0,
\]
so $T$ is strictly convex and has a unique minimizer $x^\ast$. The condition $T'(x^\ast)=0$ yields
\[
\frac{|x^\ast-a_1|}{s_1\|A_1L^\ast\|}=\frac{|x^\ast-a_2|}{s_2\|A_2L^\ast\|}.
\]
Defining $\alpha_1,\alpha_2$ as the angles of $A_1L^\ast,L^\ast A_2$, respectively, as in Figure~\ref{fig: snelllaw}, this condition becomes
\[
\frac{\sin\alpha_1}{s_1}=\frac{\sin\alpha_2}{s_2},
\]
which is equivalent to Snell's Law. The strict convexity ensures uniqueness. Hence $A_1\rightarrow L^\ast \rightarrow A_2$ is the unique optimal trajectory.
\end{proof}

\section{Proofs Omitted from Section~\ref{sec: new technical contributions} }
\label{sec: Proofs Omitted from Section sec: new technical contributions}

\begin{lemma}[Well-posedness of $\syst(\tau_0)$]
\label{lem: wellposed}
For every $\tau_0 \in \reals_+$ there exists a unique solution $(\psi,\tau)$ to $\syst(\tau_0)$ with
$\psi,\tau \in C^1((0,1])$ and continuous at $0$, satisfying $\psi(0)=\tfrac{\pi}{2}$ and $\tau(0)=\tau_0$.
Moreover, as $x\rightarrow 0^+$ we have
$$
\psi(x)=\tfrac{\pi}{2}-\pi x + O(x^2)
\quad\text{and}\quad
\tau(x)=\tau_0 - 2\pi x + O(x^2).
$$
\end{lemma}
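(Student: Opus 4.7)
The plan is to isolate the singular behavior at $x=0$, regularize it by a substitution, apply Picard--Lindelöf-type arguments to obtain local existence and uniqueness, then extend uniquely to $[0,1]$ and read off the asymptotics. The only delicate point is the term $\cot\psi(x)/x$, which is of indeterminate form $0/0$ at $x=0$ under the initial condition $\psi(0)=\pi/2$; this is a Briot--Bouquet (Fuchsian) singularity, not an essential one, so standard tools apply after regularization, and the $\tau$-equation becomes a scalar linear ODE once $\psi$ has been handled.

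First, I would set $\psi(x)=\tfrac{\pi}{2}+x\,u(x)$, so that $\cot\psi(x)=-\tann{xu(x)}$ and, expanding $\tan$ at the origin,
\[
\frac{\cot\psi(x)}{x} \;=\; -\frac{\tann{xu(x)}}{x} \;=\; -u(x)-\tfrac{1}{3}x^{2}u(x)^{3}+O\!\left(x^{4}u(x)^{5}\right).
\]
Substituting into $\syst(\tau_0)$ and rearranging yields
\[
xu'(x) \;=\; -2\pi - 2u(x) + x^{2}G(x,u(x)),
\]
with $G$ smooth in a neighbourhood of $(0,-\pi)$. This is a Briot--Bouquet equation $xu'=F(x,u)$ with $F(0,-\pi)=0$ and $\partial_{u}F(0,-\pi)=-2$. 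Since the eigenvalue $-2$ is negative (in particular not a positive integer), the classical Briot--Bouquet existence theorem delivers a unique $C^{1}$ (in fact analytic) solution with $u(0)=-\pi$. Alternatively, a direct contraction-mapping argument on the integral form
\[
u(x) \;=\; -\pi + \int_{0}^{x} \frac{-2\pi - 2u(s) + s^{2}G(s,u(s))}{s}\,\dd s
\]
works on a small interval $[0,\delta]$, since the numerator vanishes at $s=0$ uniformly for $u$ in a small ball around $-\pi$, making the integrand continuous there.

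Once $u$ (hence $\psi$) is known on $[0,\delta]$, the original equation $\psi'=-2\pi+\cot\psi/x$ is a regular ODE with locally Lipschitz right-hand side on any open set where $\sin\psi\ne 0$, so $\psi$ extends uniquely to the maximal subinterval of $[0,1]$ on which $\psi\in(0,\pi)$; a differential-inequality argument, or the geometric interpretation of $\psi$ as a refraction angle from Section~\ref{sec: fermat gives recursion}, shows that this subinterval is all of $[0,1]$. With $\psi\in C^{1}([0,1])$ and $\sin\psi$ bounded away from $0$ on any $[\epsilon,1]$, the second equation becomes the scalar linear ODE $\tau'(x)=2\pi\cot\psi(x)\,\tau(x)-2\pi$, which integrates uniquely on $[0,1]$ by variation of parameters with $\tau(0)=\tau_{0}$, and continuity at $0$ follows from boundedness of its right-hand side there. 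The asymptotic expansions then drop out directly: $u(0)=-\pi$ gives $\psi(x)=\tfrac{\pi}{2}-\pi x+O(x^{2})$, and $\tau'(0)=2\pi(\tau_{0}\cdot 0 - 1)=-2\pi$ gives $\tau(x)=\tau_{0}-2\pi x+O(x^{2})$. The hard part will be the regularization step, specifically verifying the contraction: dividing by $s$ under the integral requires a careful uniform estimate that the numerator vanishes to first order in $s$ as $s\to 0^{+}$ over a ball of candidate functions $u$. Once that Briot--Bouquet reduction is in hand, the global extension of $\psi$ and the linear $\tau$-equation are routine.
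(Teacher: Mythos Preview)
Your approach is correct and essentially the same as the paper's: regularize the singularity at $x=0$ by a substitution, obtain local existence and uniqueness, extend to $[0,1]$, then handle $\tau$ as a linear scalar ODE with continuous coefficient. The paper's substitution is $\varepsilon(x)=\tfrac{\pi}{2}-\psi(x)$, which gives the particularly clean form $(x\varepsilon)'=2\pi x - r(\varepsilon)$ with $r(\varepsilon)=O(\varepsilon^3)$, so the contraction runs on $\varepsilon(x)=\pi x - x^{-1}\!\int_0^x r(\varepsilon(s))\,\dd s$ with no division by $s$ anywhere. Your primary route via the Briot--Bouquet theorem (eigenvalue $-2$) is equivalent and correct. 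However, your \emph{alternative} integral equation is not well-posed as written: for $u$ merely in a sup-ball around $-\pi$ (e.g.\ the constant $u\equiv -\pi+\epsilon$), the numerator $-2\pi-2u(s)$ does \emph{not} vanish at $s=0$, so $(-2\pi-2u(s))/s$ is not integrable and the map is undefined; fixing this requires either the integrating-factor reformulation the paper uses, or working in a space that forces $u(0)=-\pi$ with a modulus of continuity.
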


\begin{proof}[Proof sketch]
Set $\varepsilon(x)=\tfrac{\pi}{2}-\psi(x)$. Since $\cot\psi=\tan\varepsilon$, the $\psi$-equation becomes
\begin{equation*}
x \varepsilon'(x) + \tan\varepsilon(x) = 2\pi x, \qquad \varepsilon(0)=0 .
\end{equation*}
Write $\tan\varepsilon=\varepsilon+r(\varepsilon)$ with $r(\varepsilon)=O(\varepsilon^3)$ as $\varepsilon\to 0$. Then
\begin{equation*}
(x\varepsilon(x))' = 2\pi x - r(\varepsilon(x)) .
\end{equation*}
We show next that this integral form admits a unique continuous solution near $x=0$ by a contraction mapping argument. 
Indeed, integrating from \(0\) to \(x\) gives the fixed point form
$$
\varepsilon(x)=\pi x-\frac{1}{x}\int_0^x r(\varepsilon(u))\,\d u
\qquad (x\in(0,\delta]).
$$
Fix \(a>0\) and consider the complete metric space
$$
X=\{f\in C([0,\delta]) : f(0)=0 \text{ and } \|f\|_\infty \le a\}
$$
with the sup norm. Define \(T:X\to C([0,\delta])\) by
$(Tf)(0)=0$
and
$(Tf)(x)=\pi x-\frac{1}{x}\int_0^x r(f(u))\,\d u \quad (x\in(0,\delta]).
$
Since \(r(\varepsilon)=\tan\varepsilon-\varepsilon\) is \(C^1\) near \(0\) with
\(r(0)=r'(0)=0\), there exists \(C>0\) such that for all \(|y|,|z|\le a\),
$$
|r(y)|\le C|y|^3
\quad\text{and}\quad
|r(y)-r(z)|\le C a^2 |y-z|.
$$
Choosing \(a>0\) so that \(C a^2\le \tfrac12\) and \(\delta\le \tfrac{a}{2\pi}\)
ensures that \(T(X)\subseteq X\) and that \(T\) is a contraction on \(X\).
By the Banach fixed point theorem, \(T\) has a unique fixed point
\(\varepsilon\in X\), yielding a unique continuous solution near \(x=0\).

Expanding the fixed point equation once gives
$\varepsilon(x)=\pi x + O(x^3)$, hence
\begin{equation*}
\psi(x)=\tfrac{\pi}{2}-\pi x + O(x^3),
\end{equation*}
which yields the stated $O(x^2)$ bound in Lemma~\ref{lem: wellposed}.

Given $\psi$, the function $\tau$ satisfies the first-order inhomogeneous ODE
\begin{equation*}
\tau'(x) - 2\pi \cot\psi(x) \tau(x) = -2\pi, \qquad \tau(0)=\tau_0.
\end{equation*}
Since $\cot\psi(x)=\tan\varepsilon(x)=\pi x + O(x^3)$ near $0$, the coefficient is continuous there. Standard existence and uniqueness theorems apply, and a short expansion yields
\begin{equation*}
\tau(x)=\tau_0 - 2\pi x + O(x^2).
\end{equation*}

On $(0,1]$ the right-hand sides are continuous and locally Lipschitz whenever $\psi\notin \pi\mathbb Z$. From
$\psi(x)=\tfrac{\pi}{2}-\pi x + O(x^3)$ near $0$, the trajectory stays away from $\pi\mathbb Z$ on a small interval, and standard continuation extends the unique solution to $[0,1]$.
\end{proof}

\section{Proofs Omitted from Section~\ref{sec: optimal control problem}}
\label{sec: proofs omitted from sec: optimal control problem}

\begin{proof}[Proof of Lemma~\ref{lem: continuum-limit}]
All $O(\cdot)$ estimates below hold with constants that are independent of the index $i$, 
as long as $x_i=i/n$ stays in a fixed interval $[\delta,1]$ away from $0$. 
Near $x=0$, the ODE can be written as
$$
x \frac{d}{dx}(\coss\psi)+\coss\psi=2\pi x\,\sinn\psi,
$$
which shows $\coss{\psi(x)}=\pi x+O(x^2)$. 
Hence the singularity at $0$ is removable and the solution is uniquely determined by $\psi(0)=\pi/2$.

For each $n\ge 1$ set
$$
\alpha=\tfrac{2\pi}{n},\quad x_i=\tfrac{i}{n},\quad \Delta x=\tfrac{1}{n}=\tfrac{\alpha}{2\pi}.
$$
Assume the sequences $(y_i)_{i=0}^n$ and $(t_i)_{i=0}^n$ satisfy, for $i\ge 1$,
$$
\cos y_i=\frac{i}{i+1}\cos\left(y_{i-1}-\alpha\right),
$$
and
$$
t_i-t_{i-1}=\left(t_{i-1}-\tan\left(\tfrac{\alpha}{2}\right)\right)\frac{\sin y_{i-1}}{\sin\left(y_{i-1}-\alpha\right)}-t_{i-1}-\tan\left(\tfrac{\alpha}{2}\right),
$$
with initial conditions $y_0=\pi/2$ and $t_0=\tau_0$. Define the piecewise linear interpolants $\psi_n,\tau_n:[0,1]\to\mathbb R$ by $\psi_n(x_i)=y_i$ and $\tau_n(x_i)=t_i$ for $i=0,\dots,n$.

From the recurrence for $y_i$ we obtain
$$
\cos(y_i)-\cos(y_{i-1})=\frac{i}{i+1}\cos\left(y_{i-1}-\alpha\right)-\cos(y_{i-1}).
$$
Using $i/(i+1)=1-\tfrac{1}{i+1}$ and the expansion $\cos(y_{i-1}-\alpha)=\cos y_{i-1}+\alpha\sin y_{i-1}+O(\alpha^2)$ we get
$$
\cos(y_i)-\cos(y_{i-1})=\alpha\sin y_{i-1}-\tfrac{1}{i+1}\cos y_{i-1}+O(\alpha^2).
$$
On the other hand,
$$
\cos(y_i)-\cos(y_{i-1})=-\sin y_{i-1} (y_i-y_{i-1})+O\left((y_i-y_{i-1})^2\right).
$$
Combining these expressions and dividing by $\Delta x=1/n$ yields
$$
\frac{y_i-y_{i-1}}{\Delta x}=-2\pi+\frac{\cot(y_{i-1})}{x_i}+O(\Delta x).
$$
Hence any subsequential limit $\psi$ of $(\psi_n)$ satisfies
$$
\psi'(x)=-2\pi+\frac{\cot\psi(x)}{x},\qquad \psi(0)=\tfrac{\pi}{2}.
$$

Turning to the recurrence for $t_i$, we use $\tan(\alpha/2)=\alpha/2+O(\alpha^3)$ and
$$
\frac{\sin y_{i-1}}{\sin\left(y_{i-1}-\alpha\right)}=1+\alpha\cot y_{i-1}+O(\alpha^2),
$$
to obtain
$$
t_i-t_{i-1}=\alpha\left(t_{i-1}\cot y_{i-1}-1\right)+O(\alpha^2).
$$
Dividing by $\Delta x$ gives
$$
\frac{t_i-t_{i-1}}{\Delta x}=2\pi\left(t_{i-1}\cot y_{i-1}-1\right)+O(\Delta x).
$$
Thus any limit $\tau$ of $(\tau_n)$ satisfies
$$
\tfrac{1}{2\pi}\tau'(x)=\tau(x)\cot\psi(x)-1,\qquad \tau(0)=\tau_0.
$$

On every interval $[\delta,1]$ with $\delta>0$, the sequences $(\psi_n)$ and $(\tau_n)$ are uniformly Lipschitz and bounded, hence equicontinuous. By the Arzelà-Ascoli Theorem, subsequences converge uniformly to $(\psi,\tau)$, which solves the system on $(0,1]$. The singularity at $x=0$ is removable because $\psi(0)=\pi/2$ makes $\cot\psi(x)/x$ integrable near $0$. Standard ODE continuation and uniqueness arguments then extend the solution to $[0,1]$ with $\psi(x)\in(0,\pi)$. Since the limit is unique, the entire sequences $(\psi_n,\tau_n)$ converge to $(\psi,\tau)$, which therefore satisfy the claimed ODE system.
\end{proof}

\begin{proof}[Proof of Lemma~\ref{lem: trajectory-representation}]
Set $h_i=i/n$ and $\Delta h=1/n$, with $\alpha=2\pi/n$. Define the piecewise linear interpolant $\tau_n:[0,1]\to\mathbb{R}$ by $\tau_n(h_i)=t_i$ for $i=0,\ldots,n$.

Fix $0<\delta\le \xi'<\xi$. By Definition~\ref{def: psi tau functions ode and Tau} and Lemma~\ref{lem: continuum-limit}, we have $\tau_n\to\tau$ uniformly on $[\delta,\xi']$ and $\tau(0)=\tau_0$.

Introduce the continuous family of tangent lines
$$
L(x,t)=
\begin{pmatrix}
\cos(2\pi x)\\
-\sin(2\pi x)
\end{pmatrix}
+t
\begin{pmatrix}
\sin(2\pi x)\\
\ \cos(2\pi x)
\end{pmatrix},\quad x\in[0,1],\ t\in\mathbb{R}.
$$
The discretization uses equal angular steps $\alpha=2\pi/n$ and a clockwise indexing of tangents. With this convention the tangent direction at step $i$ is the clockwise rotation by angle $2\pi h_i$, that is the angle $-2\pi h_i$. Therefore the discrete tangent line $L_i(\cdot)$ coincides with $L(h_i,\cdot)$ for every $i$, and hence
$$
A_i=L_i(t_i)=L(h_i,t_i).
$$
By Definition~\ref{def: tau trajectory}, the curve $\mathcal T$ satisfies $\mathcal{T}(x)=L(x,\tau(x))$ for all $x\in[0,1]$. Hence, for every $i$ with $h_i\in[\delta,\xi']$,
$$
\lVert A_i-\mathcal{T}(h_i)\rVert
=\lVert L(h_i,t_i)-L(h_i,\tau(h_i))\rVert
\le |t_i-\tau(h_i)| \lVert(\sin(2\pi h_i),-\cos(2\pi h_i))\rVert
=|t_i-\tau(h_i)|.
$$
Since $\tau_n(h_i)=t_i$ and $\tau_n\to\tau$ uniformly on $[\delta,\xi']$, it follows that
$$
\max_{i:~h_i\in[\delta,\xi']}\lVert A_i-\mathcal{T}(h_i)\rVert\to 0.
$$

Let $\widetilde{\mathcal{T}}_n$ be the polygonal path obtained by linear interpolation along the segments $A_{i-1}A_i$ on each interval $[h_{i-1},h_i]$. Let $K\subset\mathbb{R}$ be a compact interval containing $\tau([\delta,\xi'])$ and containing $\tau_n([\delta,\xi'])$ for all sufficiently large $n$ (this exists by uniform convergence). The map $(x,t)\mapsto L(x,t)$ is uniformly continuous on $[\delta,\xi']\times K$. Combining uniform continuity of $L$ on $[\delta,\xi']\times K$ with the convergence at the grid points yields
$$
\sup_{x\in[\delta,\xi']}\lVert \widetilde{\mathcal{T}}_n(x)-\mathcal{T}(x)\rVert \to 0
\quad\text{as } n\to\infty.
$$
Since $0<\delta\le \xi'<\xi$ was arbitrary, this proves the claimed convergence on $[0,\xi)$.

Finally,
$$
\mathcal{T}(0)=L(0,\tau(0))=(1,0)+\tau_0(0,-1)=(1,-\tau_0)=A_0,
$$
so $\mathcal{T}(0)=A_0$.
\end{proof}

\begin{proof}[Proof of Lemma~\ref{lem: deployment parameter}]
By feasibility, $\mathcal T$ intersects $x=1$ at some $\xi\in(0,1]$ with $\mathcal T_2(\xi)\ge 0$. 
The tangent at angle $2\pi\xi$ has equation $x\cos(2\pi\xi)+y\sin(2\pi\xi)=1$. 
At $x=1$ this gives 
\[
\mathcal T_2(\xi)=\frac{1-\cos(2\pi\xi)}{\sin(2\pi\xi)}=\tan(\pi\xi).
\]
Writing $a=\pi\xi$ and $\theta=(1-\xi)\pi=\pi-a$ we get $\tan(\theta)=-\tan(a)$, hence $\mathcal T_2(\xi)=\tann{\theta}$.
\end{proof}

\begin{proof}[Proof of Lemma~\ref{lem: sol to cont partial}]
Fix $n\in\naturals$ and set $\alpha=2\pi/n$. For $i=0,1,\ldots,n$ let $\phi_i$ and $L_i(\cdot)$ be as in \eqref{def:phii} and \eqref{equa: parametric tangent line}, and define $A_i=L_i(t_i)$, where $(t_i)$ and $(y_i)$ satisfy \eqref{eq: x_i}-\eqref{eq: t_i} with $y_0=\pi/2$ and $t_0=\tau_0$, where
$$
d_i:=\norm{A_i-A_{i-1}}, \quad i=1,\ldots,n.
$$
By Lemma~\ref{lem: recursion for discrete theta} we have for every $i\ge 1$
$$
d_i=\bigl(t_{i-1}-\tann{\alpha/2}\bigr)\frac{\sinn{\alpha}}{\sinn{y_{i-1}-\alpha}}.
$$

Let $k=k(n)\in\{1,\ldots,n\}$ be the largest index such that the polygonal trajectory through $(A_i)_{i=0}^n$ remains in the halfspace $x\ge 1$, and set $\xi_n:=k/n$. Under the assumption that $\tau_0$ is inspection-feasible (Definition~\ref{def: inspection feasible}), the convergence in Lemma~\ref{lem: trajectory-representation} applies and the polygonal trajectories converge to the curve $\mathcal T$ of Definition~\ref{def: psi tau functions ode and Tau}. In particular, $\xi_n\to\xi\in(0,1]$ and $A_k\to \mathcal T(\xi)$ with $\mathcal T_1(\xi)=1$ and $\mathcal T_2(\xi)\ge 0$. By Lemma~\ref{lem: deployment parameter}, we also have $\mathcal T_2(\xi)=\tann{\theta}$, where $\theta=(1-\xi)\pi$, which shows that $\mathcal T$ satisfies the $\theta$-\adi\ feasibility requirements.

It remains to compute the average cost. Following Lemma~\ref{lem: cost to discrete}, for the discrete $(\theta,k)$-instance the average cost equals
$$
C_{\theta,k}(t)=\frac{1}{k}\sum_{i=0}^{k-1} i \norm{A_{i}-A_{i-1}}
=\frac{1}{k+1}\sum_{i=1}^{k} i d_i.
$$
We pass to the continuum limit $n\to\infty$ with $k=k(n)$ and $\xi_n=k/n\to\xi$. Put $h_i:=i/n$. Using the recurrences and the smooth limit $(\psi,\tau)$ of the interpolants (Definition~\ref{def: psi tau functions ode and Tau}), we have uniformly for $i\le k$ that
$$
t_{i-1}=\tau(h_i)+O(\alpha),\qquad
y_{i-1}=\psi(h_i)+O(\alpha).
$$

Since $\psi(x)\in(0,\pi)$ on $[0,\xi]$ and $\psi'(x)=0$ only when $\cot\psi(x)=2\pi x$, we have $\psi(x)\ge \arctan(1/(2\pi\xi))>0$ for all $x\in[0,\xi]$, hence $\sinn{\psi(x)}\ge m>0$ for all $x\in[0,\xi]$. Using this lower bound we obtain the uniform expansions
$$
\sinn{\alpha}=\alpha+O(\alpha^3),\quad \tann{\alpha/2}=\alpha/2+O(\alpha^3),\quad 
\frac{1}{\sinn{y_{i-1}-\alpha}}=\frac{1}{\sinn{\psi(h_i)}}+O(\alpha),
$$
valid for all $i\le k$. Therefore
$$
d_i=(t_{i-1}-\tann{\alpha/2})\frac{\sinn{\alpha}}{\sinn{y_{i-1}-\alpha}}
=\alpha\frac{\tau(h_i)}{\sinn{\psi(h_i)}}+O(\alpha^2)\quad \text{uniformly for }i\le k.
$$

It follows that
$$
C_{\theta,k}(t)=\frac{1}{k+1}\sum_{i=1}^{k} i d_i
=\frac{\alpha}{k+1}\sum_{i=1}^{k} i\frac{\tau(h_i)}{\sinn{\psi(h_i)}}+O(1/n),
$$
since
$$
\frac{1}{k+1}\sum_{i=1}^{k} i\alpha^2
=\frac{\alpha^2}{k+1}\cdot\frac{k(k+1)}{2}
=\Theta(k\alpha^2)
=\Theta(k/n^2)
=O(1/n),
$$
and $k/n\to \xi$. Observing that the leading term is a Riemann sum with mesh $1/n$, and using $\alpha=2\pi/n$ and $k/n\to\xi$, we obtain
$$
\frac{\alpha}{k+1}\sum_{i=1}^{k} i\frac{\tau(h_i)}{\sinn{\psi(h_i)}}
=\frac{2\pi}{\xi}\frac{1}{n}\sum_{i=1}^{k}(i/n)\frac{\tau(h_i)}{\sinn{\psi(h_i)}}+o(1),
$$
which converges to $\frac{2\pi}{\xi}\int_{0}^{\xi}\frac{x\tau(x)}{\sinn{\psi(x)}}\dd x$.
\end{proof}

\begin{proof}[Proof of Lemma~\ref{lem: int attains min}]
By inspection feasibility and well-posedness of the ODE, $\psi,\tau$ are continuous on $[0,\xi]$, 
with $\psi(0)=\pi/2$ and $\tau(0)=\tau_0$. Since $\psi(x)\in(0,\pi)$ for all $x\in[0,\xi]$, the function $\sinn{\psi(x)}$ is strictly positive there, and by continuity it attains a positive minimum value $m>0$ on $[0,\xi]$. Hence
$$
f(x):=\frac{x \tau(x)}{\sinn{\psi(x)}}
$$
is continuous on $(0,\xi]$ and $\lim_{x\to 0^+} f(x)=0$. Defining $f(0):=0$ gives $f\in C[0,\xi]$. 
By the fundamental theorem of calculus,
$$
I'(\xi)=2\pi f(\xi)=2\pi \xi \frac{\tau(\xi)}{\sinn{\psi(\xi)}},
$$
as claimed.
\end{proof}

\end{document}